\tikzset{elegant/.style={smooth,thick,samples=50,magenta}}
\newtheorem{lemma}{Lemma}[section]
\newtheorem{theorem}[lemma]{Theorem}
\newtheorem{coro}[lemma]{Corollary}
\newtheorem{definition}[lemma]{Definition}
\date{}
\begin{document}
	\title{Approximation Algorithm for Unrooted Prize-Collecting Forest with Multiple Components and Its Application on  Prize-Collecting Sweep Coverage}
	\author{Wei Liang\textsuperscript{1}\thanks{Email: lvecho1019@zjnu.edu.cn}, Shaojie Tang\textsuperscript{2}\thanks{Email: shaojie.tang@
			utdallas.edu}, Zhao Zhang\textsuperscript{1}\thanks{Corresponding author: Zhao Zhang, hxhzz@sina.com}\\
		\small \textsuperscript{1} School of Mathematical Sciences, Zhejiang Normal University\\
		\small Jinhua, Zhejiang, 321004, People's Republic of China\\
		\small \textsuperscript{2} Naveen Jindal School of Management, University of Texas at Dallas\\
		\small Richardson, Texas 75080, USA
	 }

\maketitle
\begin{abstract}
	
In this paper, we introduce a polynomial-time 2-approximation algorithm for the Unrooted Prize-Collecting Forest with $K$ Components (URPCF$_K$) problem. URPCF$_K$ aims to find a forest with exactly $K$ connected components while minimizing both the forest's weight and the penalties incurred by unspanned vertices. Unlike the rooted version RPCF$_K$, where a 2-approximation algorithm exists, solving the unrooted version by guessing roots leads to exponential time complexity for non-constant $K$. To address this challenge, we propose a rootless growing and rootless pruning algorithm. We also apply this algorithm to improve the approximation ratio for the Prize-Collecting Min-Sensor Sweep Cover problem (PCMinSSC) from 8 to 5.

Keywords: approximation algorithm, prize-collecting Steiner forest, sweep cover.

\end{abstract}


\section{Introduction}

The Prize-Collecting Steiner Tree (PCST) problem, a classic combinatorial optimization problem, has received significant attention in the literature (refer to surveys \cite{PCSTsurvey,STsurvey}). lt plays an important role in various domains, including network design \cite{Johnson2000,Leitner2018}, vehicle routing \cite{Harks2013,liang2023,Tang2022}, and more. In the PCST problem, we are given a graph with an edge cost function represented by $w$ and a vertex penalty function denoted as $\pi$. The objective is to find a tree $T$ that minimizes both the sum of edge cost of $T$ and the total penalty associated with vertices not covered by $T$. PCST is known to be NP-hard and serves as a generalization of the Steiner tree problem \cite{karp1972}. A 2-approximation algorithm was proposed \cite{Goeman1995,feofiloff2007} for this problem, and later, Archer et al. \cite{archer2011} improved the approximation ratio to 1.9672.


In this paper, we study an extension of the problem known as the Prize-Collecting Forest with $K$ Components (PCF$_K$) problem. When provided with a positive integer $K$, PCF$_K$ seeks to identify a forest, denoted as $F$, comprising exactly $K$ components while minimizing the sum of the cost of the forest's edges and the penalties incurred by vertices not spanned by $F$.


While various PCST variants exist (see Related Work), it is notable that PCF$_K$, a natural extension, received limited attention until 2020 when Liang {\it et al.} \cite{Liang-PCMSSC-TCS} introduced it as a foundation for their study on Prize-Collecting Min-Sensor Sweep Cover (PCMinSSC). In their work, an algorithm for {\em rooted} PCF$_K$ (RPCF$_K$) played a crucial role for the PCMinSSC problem with an added base station assumption. They pointed out that when solving PCMinSSC without base station assumptions, an {\em unrooted} PCF$_K$ (URPCF$_K$) problem may be needed, and solving URPCF$_K$ posed a substantial theoretical challenge. Further details on this challenge are explained as follows.

In a sweep cover problem, mobile sensors must periodically visit points of interest (Pols). PCMinSSC aims to optimize sensor routes by minimizing both the number of active sensors and the total penalty incurred by unvisited Pols. A 5-approximation algorithm for a specific PCMinSSC variant was developed in \cite{Liang-PCMSSC-TCS}, where a constant number of base stations is present and each active sensor must pass through a base station. This algorithm relies on a 2-approximation algorithm for the rooted PCFK problem (RPCF$_K$), which requires predefined roots, with each component containing exactly one root. The introduction of base stations was driven by the inability to devise an approximation algorithm for the unrooted PCF$_K$ problem (URPCF$_K$) from a theoretical standpoint. In \cite{liang2023}, it was established that an 8-approximation algorithm for PCMinSSC could be achieved without assuming base stations. This approach employs a sub-procedure involving a 2-approximation algorithm for PCST on an auxiliary graph. The higher approximation ratio comes from transforming a tree in the auxiliary graph into a forest in the original graph. To improve the approximation ratio, an approach, as presented in this paper, is to directly explore approximation algorithms for URPCF$_K$.

The challenge caused by the {\em unrooted} concern lies in the following observations. For $K=1$, PCF$_1$ is exactly the prize-collecting Steiner tree problem PCST, and a rooted PCST algorithm $\mathcal A_{RPCST}$ can be used to solve the unrooted PCST problem as follows: run $|V|$ times of algorithm $\mathcal A_{RPCST}$, treating each vertex as a root, and choose the best one as a solution of the unrooted PCST instance. Indeed, many studies tackling the unrooted PCST problem have chosen to first convert it into its rooted counterpart, a practice observed in several papers\cite{Goeman1995,bienstock1993,Cole2001,archer2011}. This approach is favored due to the observation that the rooted PCST problem is more amenable to be formulated as an integer linear program.



This insight extends to situations where $K$ is a constant. When armed with an algorithm for the RPCF$_K$ problem, to solve the URPCF$_K$ problem, one can ``guess" $K$ roots, apply RPCF$_K$ for each conjecture, and subsequently select the most favorable outcome. By employing this approach, the URPCF$_K$ algorithm is executed $O(n^K)$ times, which is polynomial if $K$ is a constant.



A significant challenge arises when $K$ becomes an input, as seen in \cite{Liang-PCMSSC-TCS} when the authors address the PCMinSSC problem with base station assumptions. Their algorithm incorporates an algorithm for PCF$_K$ with $K$ varying from 1 to {the number of base stations}, which results in a running time exponentially depending on the number of base stations. This observation has prompted us to explore URPCF$_K$ directly.
	
In this paper, we have successfully developed a 2-approximation algorithm for URPCF$_K$, where $K$ is not necessarily a constant. Building upon this algorithm, we have achieved a 5-approximation algorithm for the PCMinSSC problem, representing an enhancement over the previously attained approximation ratio of 8 as reported in \cite{liang2023}.

\subsection{Related Work}

Bienstock {\it et al.} \cite{bienstock1993} provided the first constant-approximation algorithm for PCST, achieving a ratio of at most 3. Goemans and Williamson introduced the GW-algorithm in \cite{Goeman1995}, attaining a 2-approximation. Feofiloff {\it et al.} \cite{feofiloff2007} later improved it slightly to $2-2/n$, considering the number of vertices $n$. Notably, the natural linear program relaxation of PCST has an integrality gap of 2. Archer {\it et al.} \cite{archer2011} broke the 2-approximation barrier, presenting an algorithm with a ratio less than 1.9672. The above works are all within the context of general graphs. For Euclidean or planar underlying graphs, polynomial-time approximation schemes (PTASs) were provided in \cite{Chan2020} and \cite{Bateni2011}.


A generalization of PCST is the Prize-Collecting Steiner Forest (PCSF) problem. In PCSF, we are given an edge-weighted graph $G=(V,E,w)$ and a set $\mathcal P=\{(s_i,t_i):s_i,t_i\in V,s_i\neq t_i\}$ of demand pairs, each associated with a nonnegative penalty. The objective in PCSF is to find a subgraph $F\subseteq G$ that minimizes both the edge cost of $F$ and the penalties incurred for unconnected pairs. In \cite{hajiaghayi2006}, Hajiaghayi and Jain demonstrated that a prima-dual approach can also achieve a 3-approximation, and they further provided a $(\frac{1}{1-e^{-1/2}})$-approximation using randomized LP rounding.
PCSF has been extended in \cite{han2017} to cases with demands on sets of vertices instead of pairs, and in \cite{jia2022} to situations where the penalty function is submodular. Both extensions have constant approximation ratios. lt is important to note that PCSF differs from our PCF$_K$ problem, as PCSF requires the forest to connect predefined demand pairs, while PCF$_K$ focuses on the number of forest components without specifying specific vertices or pairs.

Some researchers considered the {\em $k$-prize-collecting Steiner tree} ($k$-PCST) problem, which is a combination of the PCST problem and the {\em $k$-minimum spanning tree} ($k$-MST) problem. A $k$-PCST instance asks for a tree spanning at least $k$ vertices such that the {\em tree-plus-penalty value} is minimized. The problem was first proposed by Han {\it et al.} in \cite{han2019} and a $5$-approximation was obtained using a primal-dual method combined with a Lagrange relaxation method. Later, Matsuda and Takahashi \cite{matsuda2019} obtained approximation ratio $(\alpha+\beta)$, where $\alpha$ and $\beta$ are the approximation ratios of PCST and $k$-MST, respectively. Using the best known ratio 1.9672 for PCST \cite{archer2011} and the best known ratio $2$ for $k$-MST \cite{garg2005}, their ratio for $k$-PCST is at most 3.9672. Later, the approximation ratio was improved to 2 by Pedrosa and Rosado \cite{pedrosa2022} using a modification of the GW-algorithm.


When the cost function is associated with the vertex set rather than the edge set, it gives rise to the Node-Weighted Prize-Collecting Steiner Tree (NW-PCST) problem. Moss {\it et al.} \cite{moss2007} and K\"{o}nemann {\it et al.} \cite{konemann2013}, independently, introduced $O(\ln n)$-approximation algorithms for NW-PCST, which are considered asymptotically optimal unless $P=NP$. In planar graphs, better approximations are attainable: Byrka {\it et al.} \cite{Byrka2016} developed two approximation algorithms for NW-PCST in planar graphs, achieving approximation ratios of 3 and $(2.88+\varepsilon)$, respectively. Additionally, there exists the Node-Weighted Prize-Collecting Steiner Forest (NW-PCSF) problem, which is a node-weighted version of PCSF introduced earlier. In planar graphs, NW-PCSF admits a 4-approximation.

Another generalization of PCST is the {\em group prize-collecting Steiner tree} (GPCST) problem, in which the vertices are partitioned into several groups $\mathcal V=\{V_1,\ldots,V_t\}$, each group is associated with a penalty, a group is said to be  {\em spanned} by a tree $T$ if at least one vertex in the group is contained in $T$, the goal of GPCST is to find a tree $T$ to minimize its edge cost plus the penalty of those groups not spanned by $T$. Glicksman {\it et al.} \cite{glicksman2008} gave a $(2I)$-approximation algorithm, where $I=\max\{|V_1|,\ldots,|V_t|\}$. In \cite{zhang2022}, Zhang {\it et al.} studied the problem assuming that the penalty function is submodular, and also obtained a $(2I)$-approximation.

As previously mentioned, PCF$_K$ was initially introduced as a foundational concept for a sweep cover problem. Sweep cover problems have a wide range of applications, including police patrolling, data gathering, device control, and more \cite{Gao2018,Liang2019,Feng-2015,zhao-2012}. Beginning with the theoretical exploration by Cheng {\it et al.} \cite{Li}, various sweep cover problems have been extensively investigated in the literature.
According to different optimization objectives, current studies on sweep cover can be roughly divided into five categories:

$(\romannumeral1)$ determining the minimum number of mobile sensors \cite{Gorain2015,Chenqq2018,Liang2019,Liu-2021,NIE202170,liang2023};

$(\romannumeral2)$ minimizing the maximum {\em sweep period} \cite{Gao2020,Gao2018};

$(\romannumeral3)$ designing shortest scheduling routes for mobile sensors\cite{Feng-2015,Chen-2021,Liu20201};

$(\romannumeral4)$ determining the minimum sensor speed \cite{zhao-2012,GU-2006};

$(\romannumeral5)$ finding a sweep-cover without requiring all PoIs to be visited, including the PCMinSSC problem \cite{Liang-PCMSSC-TCS} and the {\em budgeted sweep cover} (BSC) problem \cite{Diyan2021,huang2018} the goal of which is to maximize the number of PoIs sweep-covered by a budgeted number of mobile sensors, the {\em min-sensor partial sweep cover} (MinSPSC) problem \cite{liang2023} the goal of which is to minimize the number of mobile sensors to sweep-cover at least $K$ PoIs.

In \cite{Liang-PCMSSC-TCS}, a 5-approximation algorithm (in fact, a 5-LMP algorithm) was obtained for PCMinSSC under the assumption that there is a constant number of predefined base stations and every mobile sensor has to go through some base station. The algorithm is based on a 2-LMP algorithm for RPCF$_K$. While a constant-approximation algorithm for the PCMinSSC problem without base station assumptions have been established in \cite{liang2023}, enhancing the approximation ratio necessitates the development of an approximation algorithm for the unrooted PCF$_K$ problem, as previously discussed.


\subsection{Our Contribution}

This paper's contributions are summarized as follows:
\begin{itemize}
	\item We introduce the first approximation algorithm for the unrooted PCF$_K$ problem, URPCF$_K$. Despite extensive research on the PCST problem, the natural generalization of specifying the number of components in a forest has not been explored until now. With the emergence of the PCMinSSC problem, the study of PCF$_K$ finds practical and theoretical applications. lt is important to note that the prize-collecting Steiner forest problems examined in \cite{hajiaghayi2006}, despite similar names, differ from the one addressed in this paper.
	\item Our URPCF$_K$ algorithm achieves an approximation ratio of at most 2 and can be executed in polynomial time, regardless of whether $K$ is a constant or not. It operates as a lagrangian multiplier preserving algorithm with a factor of 2 (2-LMP). Notably, our algorithm avoids exponential time complexity, which would occur if we guessed $K$ roots (resulting in $O(n^K)$ guesses) and utilized the 2-LMP algorithm for RPCF$_K$  from \cite{Liang-PCMSSC-TCS}. Instead, our approach ensures a more favorable runtime of $O(n^2K^2+mn)$, where $m$ and $n$ are the number of edges and vertices, respectively.
	
	\item To address the issue of exponential time complexity, we propose a rootless growth plus rootless pruning method. Our analysis leverages a rooted analog as a bridge, which is only used in the analysis and does not affect the algorithm's time complexity. We also introduce the concept of net worth value from \cite{Johnson2000}. Using this concept, we demonstrate that the solution produced by our unrooted algorithm is no worse than its rooted counterpart. A key technique for connecting the rooted and unrooted versions while avoiding the guessing of $O(n^K)$ roots is a ``double dynamic programming" method, embedding two dynamic programming algorithms within a primary dynamic programming algorithm. These ideas may have broader applications for solving related problems.
	
	\item The 2-LMP algorithm for URPCF$_K$ is employed to derive a 5-LMP algorithm for PCMinSSC, improving upon the previous 8-approximation presented in \cite{liang2023}. Furthermore, it is worth noting that the 2-LMP for URPCF$_K$ can be adapted into a 2-LMP for the prize-collecting traveling multi-salesman problem ($m$-PCTSP).
\end{itemize}

The rest of this paper is organized as follows: The problem is formally defined in Section~\ref{sec001}, together with some preliminary results. A 2-LMP for URPCF$_K$ is presented in Section~\ref{sec002}, based on which a 5-LMP algorithm for PCMinSSC is presented in Section~\ref{sec003}. Section~\ref{sec004} extends the 2-LMP to the $m$-PCTSP problem and concludes the paper.

\section{Problem formulation and preliminaries}\label{sec001}
This section formally defines the problems studied in this paper.

\begin{definition}[unrooted prize-collecting forest with $K$ components (URPCF$_K$)]\label{def1204-1}
	{\rm Given a positive integer $K$ and a graph $G=(V,E)$ with an edge cost function $w: E\mapsto\mathbb{R}^+$ and a vertex penalty function $\pi: V\mapsto\mathbb{R}^+$, the goal of URPCF$_K$ is to find a forest $F$ with $K$ connected components (a connected component  can also be a singleton), such that the {\em cost-plus-penalty} value is minimized, that is,
$$\min\limits_{F\in\mathcal{F}_K}\{w(F)+\pi(V\setminus V(F))\},$$ where $\mathcal{F}_K$ is the collection of forests with $K$ connected components, $w(F)=\sum_{e\in E(F)}w(e)$ and $\pi(V\setminus V(F))=\sum_{v\notin V(F)}\pi(v)$.}
\end{definition}
The rooted analog of URPCF$_K$, denoted as RPCF$_K$, was first proposed in \cite{Liang-PCMSSC-TCS}. Its formal definition is given as follows.

\begin{definition}[rooted prize-collecting forest with $K$ components (RPCF$_K$)]
	{\rm Given graph $G$, functions $w,\pi$, and integer $K$ as in Definition \ref{def1204-1}, furthermore, let $R=\{r_1,\ldots,r_K\}\subseteq V(G)$ be a set of $K$ pre-specified roots, the goal of RPCF$_K$ is to find a forest $F_R$ with $K$ components, each component contains exactly one root, such that the {\em cost-plus-penalty value} $w(F_R)+\pi(V\setminus V(F_R))$ is minimized.}
\end{definition}

For a positive real number $r\geq1$ (resp. $r\leq1$), a polynomial time algorithm for a minimization (resp. maximization) problem is said to be an {\em $r$-approximation algorithm} if for any instance of the problem, the output of the algorithm has value at most (resp. at least) $r$ times that of an optimal solution. The algorithms presented in the paper for PCF$_K$ are stronger than merely being $r$-approximation, they are $r$-LMP algorithms, whose definition is given as follows.

\begin{definition}[Lagrangian multiplier preserving algorithm with factor $r$ ($r$-LMP) for PCF$_K$ (either rooted or unrooted)]
	{\rm An algorithm for PCF$_K$ is said to be an $r$-LMP if for any instance $I$ of PCF$_K$, the algorithm can compute in polynomial time a forest $F$ with $K$ components such that
		$$w(F)+r\cdot\pi(V\setminus V(F))\leq r\cdot opt(I),$$
		where $opt(I)$ is the optimal value of instance $I$.}
\end{definition}

The studies of PCF$_K$ are motivated by the PCMinSSC problem, and our results are applied to PCMinSSC. A sweep cover problem is motivated by inspecting points of interest periodically. Imagine an application involving static sensors distributed throughout a forest, gathering data on factors like temperature and humidity. This data can be used to predict the risk of a forest fire. To ensure timely fire detection, we want to employ a set of mobile sensors to collect information from the static sensors periodically, let us say every three hours. The challenge is to use as few mobile sensors as possible to accomplish this task. This scenario gives rise to a concept known as the ``sweep cover'', which can be defined as follows:


\begin{definition}[sweep cover]
	{\rm Consider a graph $G$ on vertex set $V$ and edge set $E$. Assume there is a group of mobile sensors moving along the edges of $G$ at speed $a$. Let $t$ be a positive real number representing the sweep period. A vertex $v$ is considered {\em $t$-sweep-covered} if it can be visited at least once within every time period $t$ by a mobile sensor passing through it.}
\end{definition}

The problem of minimum sensor sweep cover was introduced in \cite{Li} as follows:

\begin{definition}[minimum sensor sweep cover (MinSSC)]
{\rm Given a graph $G = (V, E)$, a sweep period $t$, and the speed $a$ of mobile sensors, the objective of MinSSC is to design routes for the minimum number of mobile sensors required to sweep-cover all vertices of $G$.}
\end{definition}

Based on the observation that, in numerous applications, it is more cost-effective to forego serving certain clients and instead pay penalties, \cite{Liang-PCMSSC-TCS} introduced a problem known as the ``prize-collecting sweep cover problem'', which is formally defined as follows:

\begin{definition}[prize-collecting min-sensor sweep cover (PCMinSSC)]
	{\rm Given a graph $G=(V,E)$ with a metric edge length function $l: E\mapsto\mathbb{R^+}$ and a vertex penalty function $\pi: V\mapsto\mathbb{R^+}$, the goal is to design routes for a set of mobile sensors such that the {\em sensor-plus-penalty value} is minimized, that is,
	$$\min\limits_{\mathcal S:\mbox{\small  a set of sensors and their routes}}\{c\cdot|\mathcal{S}|+\pi(V\setminus\mathcal{C}(\mathcal{S}))\},$$
	where $c$ denotes the cost of a mobile sensor, $\mathcal C(\mathcal S)$ is the set of vertices sweep-covered by $\mathcal S$ and  $\pi(V\setminus\mathcal{C}(\mathcal{S}))$ is the total penalty incurred by uncovered vertices.}	
\end{definition}

Making use of our algorithm for URPCF$_K$ on the PCMinSSC problem, we can also obtain an $r$-LMP for PCMinSSC.

\begin{definition}[Lagrangian multiplier preserving algorithm with factor $r$ ($r$-LMP) for PCMinSSC]
	{\rm A polynomial time algorithm is said to be an $r$-LMP for PCMinSSC if for any instance $I$ of PCMinSSC, the computed routes of a set of mobile sensors $\mathcal S$ satisfy
		$$c\cdot|\mathcal S|+r\cdot\sum\limits_{v\notin\mathcal C(\mathcal S)}\pi(v)\leq r\cdot opt(I),$$
		where $opt(I)$ is the optimal value of instance $I$.}
\end{definition}

Building upon the insight that a minimum spanning forest consisting of $K$ components can be derived from a minimum spanning tree by removing the most expensive $K-1$ edges, we consider applying a similar strategy to address the URPCF$_K$ problem. Specifically, this involves removing the $K-1$ most costly edges from an approximate solution to the PCST problem.
To illustrate this approach, consider the example in Fig.~\ref{fig1024}, where the value above a vertex denotes the penalty and the value below an edge denotes its weight. Suppose $K = 2$. An optimal PCST solution would be the entire path. However, if we remove the most expensive edge with a weight of $n + 1$, we obtain a solution for the URPCF$_2$ instance, as indicated by the blackened edges in Fig.~\ref{fig1024} $(b)$. The forest-plus-penalty value for this solution is $n + 2$. An optimal URPCF$_2$ solution is represented by the blackened edges in Fig.~\ref{fig1024} $(c)$, resulting in a forest-plus-penalty value of 3. The approximation ratio in this case is $(n + 2) / 3= \Theta(n)$.
The example shows that this seemingly intuitive approach, which works well for the ``spanning'' analogue, falls short of providing an optimal solution when not all vertices need to be spanned. This observation highlights the need to explore new methods and strategies to tackle the problem effectively.
	
	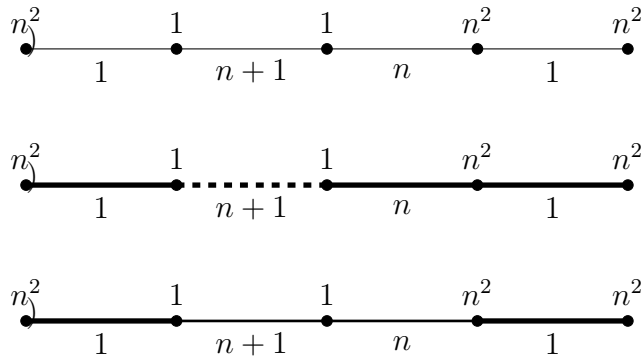
\begin{figure}[H]
		\begin{center}
			\begin{tikzpicture}
				\draw(0,0)--(2,0)--(4,0)--(6,0)--(8,0);
				\filldraw(0,0)circle(2pt)node[above=2pt]{$n^2$};
				\filldraw(2,0)circle(2pt)node[above=2pt]{$1$};
				\filldraw(4,0)circle(2pt)node[above=2pt]{$1$};
				\filldraw(6,0)circle(2pt)node[above=2pt]{$n^2$};
				\filldraw(8,0)circle(2pt)node[above=2pt]{$n^2$};
				\node (P) at (1,-0.3) {$1$};\node (P) at (3,-0.3) {$n+1$};
				\node (P) at (5,-0.3) {$n$};\node (P) at (7,-0.3) {$1$};
\put(250,0){$(a)$}
			\end{tikzpicture}
			\vskip 0.5cm
			\begin{tikzpicture}
				\draw[line width=2pt](0,0)--(2,0);
				\draw[line width=2pt,dashed](2,0)--(4,0);
				\draw[line width=2pt](4,0)--(6,0)--(8,0);
				\filldraw(0,0)circle(2pt)node[above=2pt]{$n^2$};
				\filldraw(2,0)circle(2pt)node[above=2pt]{$1$};
				\filldraw(4,0)circle(2pt)node[above=2pt]{$1$};
				\filldraw(6,0)circle(2pt)node[above=2pt]{$n^2$};
				\filldraw(8,0)circle(2pt)node[above=2pt]{$n^2$};
				\node (P) at (1,-0.3) {$1$};\node (P) at (3,-0.3) {$n+1$};
				\node (P) at (5,-0.3) {$n$};\node (P) at (7,-0.3) {$1$};
\put(250,0){$(b)$}
			\end{tikzpicture}
			\vskip 0.5cm
			\begin{tikzpicture}
				\draw[line width=2pt](0,0)--(2,0);
				\draw[line width=1pt](2,0)--(4,0);
				\draw[line width=1pt](4,0)--(6,0);
				\draw[line width=2pt](6,0)--(8,0);
				\filldraw(0,0)circle(2pt)node[above=2pt]{$n^2$};
				\filldraw(2,0)circle(2pt)node[above=2pt]{$1$};
				\filldraw(4,0)circle(2pt)node[above=2pt]{$1$};
				\filldraw(6,0)circle(2pt)node[above=2pt]{$n^2$};
				\filldraw(8,0)circle(2pt)node[above=2pt]{$n^2$};
				\node (P) at (1,-0.3) {$1$};\node (P) at (3,-0.3) {$n+1$};
				\node (P) at (5,-0.3) {$n$};\node (P) at (7,-0.3) {$1$};
\put(250,0){$(c)$}
			\end{tikzpicture}\caption{An example showing that an idea for URPCF$_K$ which works well when $K=1$ does not work for $K\geq 2$.}\label{fig1024}
		\end{center}
	\end{figure}

\section{A 2-LMP for URPCF$_K$ }\label{sec002}
In this section, we give a 2-LMP algorithm $\mathcal A_1$ for URPCF$_K$. Given an instance $I=(G,K)$ of URPCF$_K$, $\mathcal A_1(I)$ consists of two sub-procedures:
$$
\mbox{Rootless-Growth$(I)$ and Rootless-Prune$(I)$}.
$$
In the following, we shall use $opt_I$ and $OPT_I$ to denote the optimal value and an optimal solution of instance $I=(G,K)$, respectively.

To analyze the performance of $\mathcal A_1$, we make use of an associated rooted PCF$_K$ instance as a step stone. Let $R$ be a set of $K$ vertices of $OPT_I$, taking exactly one vertex from each of the $K$ components of $OPT_I$. Let $I_R=(G,R)$ be an RPCF$_K$ instance. Denote by $opt_{I_R}$ the optimal value of instance $I_R$. Note that
\begin{equation}\label{eq1207-1}
	opt_{I_R}=opt_I.
\end{equation}
It is crucial to emphasize that while the construction of $I_R$ involves utilizing an optimal solution to the problem $I$, this information is employed solely as an intermediate step in our analysis. Importantly, algorithm $\mathcal{A}_1$ does not require knowledge of this information to function effectively.

An algorithm, called $\mathcal A_2$, is designed for the rooted PCF$_k$ problem on instance $I_R$. It consists of three sub-procedures:
$$
\mbox{Rootless-Growth$(I_R)$, $K$-Forest-Step$(I_R)$ and Reverse-Deletion$(I_R)$}.
$$
It is worth highlighting that following the Rootless-Growth step, it is possible for one tree within the forest to have more than one root. The task of the $K$-Forest-Step is to address this situation by splitting some of these trees into smaller ones, ensuring that each tree contains precisely one root. This aspect is a novel addition that distinguishes it from the PCST problem. Additionally, this introduces new considerations and complexities into the Reverse-Deletion step of the algorithm.


Denote by $pc(\mathcal A_i)$ the {\em cost-plus-penalty} value of the forest computed by algorithm $\mathcal A_i$ (takeing $r=2$ in the definition of $r$-LMP into consideration, that is, $pc(\mathcal A_i)=w(F_i)+2 \pi(V\setminus V(F_i))$ where $F_i$ is the output forest of algorithm $\mathcal A_i$). We shall prove
\begin{equation}\label{eq1207-2}
	pc(\mathcal A_1)\leq pc(\mathcal A_2)\leq 2opt_{I_R}.
\end{equation}
By combining this observation with equation \eqref{eq1207-1}, we can derive an approximation ratio of 2 for algorithm $\mathcal A_1$. For ease of understanding, we present the algorithms in the order of $\mathcal A_2,\mathcal A_1$ in the following.

\subsection{$\mathcal A_2$: Rootless-Growth+ $K$-Forest-Step+ Reverse-Deletion} \label{sec3-1}
In this section, we introduce a 2-LMP algorithm for instance $I_R$ of RPCF$_K$, utilizing a primal-dual approach. In a previous work \cite{Liang-PCMSSC-TCS}, a 2-LMP algorithm for RPCF$_K$ was developed, comprising two key steps: rooted-growth and reverse-deletion. However, this existing algorithm cannot be directly applied as a building block for the analysis of our unrooted PCF$_K$ algorithm. Consequently, we need to introduce a different algorithm tailored to URPCF$_K$. The modification involves replacing the rooted-growth method with a rootless-growth approach while maintaining the same approximation ratio. To achieve this, we initiate the process with an LP formulation of RPCF$_K$.

\subsubsection{LP formulation}\label{RLP}
Suppose $G=(V,E)$ is a graph and $R=\{r_1,\ldots,r_K\}\subseteq V$ is a set of roots. For a forest $F$, define two types of indicator variables $\{x(e)\}_{e\in E}$ and $\{z(T)\}_{T\subseteq V-R}$ as follows. For each edge $e\in E$, let
\begin{equation}\label{eq1014-1}
	x(e)=\left\{\begin{array}{ll}
		1, &\mbox{if $e\in F$},\\
		0, &\mbox{otherwise}. \end{array}\right.
\end{equation}
We say that a vertex set $T$ {\em is not spanned by $F$} if no vertex of $T$ is spanned by $F$. For each vertex set $T\subseteq V-R$, let
\begin{equation}\label{eq1014-2}
	z(T)=\left\{\begin{array}{ll}
		1, &\mbox{if $T$ is not spanned by $F$},\\
		0, &\mbox{otherwise}. \end{array}\right.
\end{equation}
An edge $e$ is said to {\em cut} a vertex set $S$ if $e$ has exactly one end in $S$. Denote by $\delta(S)$ the set of edges that cut $S$, and define  $x(\delta(S))=\sum_{e\in \delta(S)}x(e)$. The LP relaxation corresponding to an integer linear program (ILP) of an RPCF$_K$ instance can be written as follows, which is the same as that in \cite{Liang-PCMSSC-TCS}:

\begin{align}\label{eq1208-1}
	\min &\sum_{e\in E} w(e)x(e)+\sum_{T\subseteq V-R}\left(\sum_{v\in T}\pi(T)\right)z(T)\\
	\mbox{s.t.} \ &x(\delta(S))+\sum_{T\subseteq V-R: S
	\subseteq T}z(T)\geq 1, \ \forall S\subseteq V-R,\nonumber\\
&x(e)\geq 0, \ \forall e\in E,\nonumber\\
&z(T)\geq 0,\ \forall T\subseteq V-R.\nonumber
\end{align}
The explantation for the ILP corresponding to \eqref{eq1208-1} is as follows. An optimal integer solution corresponds to a forest $F$, and there is only one vertex set $T$ having $z(T)=1$, namely $T_F=V-V(F)$. It is the biggest vertex set not spanned by $F$. The first constraint says that $\forall S\subseteq V-R$, either it is cut by an edge of $F$, or it is contained in $T_F$.

The dual LP of \eqref{eq1208-1} is:
\begin{align}\label{lp2}
	\max &\sum_{S\subseteq V-R}y_S\\
	\mbox{s.t.}&\ \sum_{S\subseteq V-R: e\in\delta(S)}y_S\leq w(e), \ \forall e\in E, \label{eq1208-2}\\
	&\sum_{S\subseteq T}y_S\leq \sum_{v\in T}\pi(v),\ \forall T\subseteq V-R, \label{eq1208-3}\\
	&y_S\geq 0, \ \forall S\subseteq V-R. \nonumber
\end{align}
In fact, our rootless-growth is based on the following {\em strengthening} of this dual LP:
\begin{align}\label{eq1221-1}
	\max &\sum_{S\subseteq V-R}y_S\\
	\mbox{s.t.}&\ \sum_{S\subseteq V: e\in\delta(S)}y_S\leq w(e), \ \forall e\in E, \label{eq1221-2}\\
	&\sum_{S\subseteq T}y_S\leq \sum_{v\in T}\pi(v),\ \forall T\subseteq V, \label{eq1221-3}\\
	&y_S\geq 0, \ \forall S\subseteq V. \nonumber
\end{align}
Compared with the previous dual LP, the variables in the
relaxed dual LP also include those $y_S$ corresponding to set $S$ containing roots, the left term of the first constraint is changed into the summation over $S\subseteq V$ instead of $S\subseteq V-R$, and the second type of constraints are for any $T\subseteq V$ instead of $T\subseteq V-R$. The modified constraints are tighter, and thus the optimal value of the relaxed dual LP \eqref{eq1221-1} is a lower bound for the optimal value of \eqref{lp2}, and in turn a lower bound for the optimal value of \eqref{eq1208-1} (by duality theory).

\subsubsection{Ideas of the algorithm}
First, let us provide a concise summary of the key concepts from the rooted-growth and reverse-deletion method outlined in \cite{Liang-PCMSSC-TCS}. Then, we will highlight the differences between this approach and the rootless-growth and $K$-forest-step methods introduced in this paper.

In \cite{Liang-PCMSSC-TCS}, the algorithm begins with a zero-solution of the dual LP \eqref{lp2} and incrementally increases the dual variables $y_S$ associated with active components (the meaning of ``active'' will be explained in details in the following) until specific constraints become tight, meaning that inequalities become equalities. When a constraint of the form \eqref{eq1208-2} becomes tight, the corresponding edge is added to the forest $F$. If a constraint of the form \eqref{eq1208-3} becomes tight, the vertex set $T$ related to this constraint is deactivated, implying that the dual variable $y_T$ is frozen and will not be further increased. This process continues until all vertex sets are inactive. To achieve this in polynomial time, the algorithm only considers dual variables associated with connected components of $G_F = (V,F)$, which are termed effective dual variables. Other dual variables, which number exponentially, are implicitly set to zero. Initially, $G_{\emptyset}=(V,\emptyset)$ comprises $|V|$ {\em trivial} connected components. Every vertex in $R$ is set to be an inactive component, while those vertices in $V-R$ are set to be active components. Only active components can have their dual variables increased. During the process, a component containing a root always remains inactive. Therefore, when two components merge due to the addition of an edge between them, and one of them is inactive, the merged new component also becomes inactive. Notably, no two components containing roots can merge because they are both inactive, thus unable to have their dual variables increased. An edge $e$ can only change from loose to tight when at least one of the two components it cuts has an increased dual variable. Consequently, each component contains at most one root. At the end of the rooted-growth step, components with exactly one root are retained. In the reverse-deletion step, edges cutting inactive leaf components are deleted.

The rootless-growth step in this paper differs from the rooted-growth step in two main ways. First, it is based on dual LP \eqref{eq1221-1} instead of dual LP \eqref{lp2}. Second, all components, regardless of whether they contain roots, are treated equally. A component becomes inactive only when the corresponding constraint for its vertex set becomes tight. However, this approach can lead to some components containing more than one root by the end of the rootless-growth step. This is where the $K$-forest step comes in: it involves deleting certain costly edges to ensure that each component contains at most one root. As before, only components with exactly one root are retained, and reverse-deletion is applied to the forest obtained after the $K$-forest-step.

\subsubsection{The algorithm and some explanation}
The details of the algorithm are given in Algorithm~\ref{algo1}.

\begin{algorithm}[htbp]
	\caption {Algorithm for RPCF$_K$}
	\hspace*{0.02in}\raggedright{\bf Input:} A graph $G=(V,E)$ with edge cost function $w$ and vertex penalty function $\pi$, a positive integer $K$ and a set $R=\{r_1,\ldots,r_K\}$ of $K$ roots.
	
	\hspace*{0.02in}\raggedright{\bf Output:} A forest with $K$ connected components, each component contains one root.
	
	\begin{algorithmic}[1]
		\STATE \textbf{// Rootless-Growth}
		\STATE $F\leftarrow\emptyset$; $\mathcal{C}=\{\{v\}:v\in V\}$; $\forall v\in V$, $\kappa(\{v\})\leftarrow 1$, $d(v) \leftarrow 0$, $y_{\{v\}}\leftarrow 0$, $h(\{v\})\leftarrow 0$;
		\WHILE{$\exists$ an active component in $\mathcal{C}$}
			\STATE $\varepsilon_1\leftarrow \min_{e=uv, u\in S_u\in\mathcal{C}, v\in S_v\in\mathcal{C}, S_u\neq S_v}\left\{ \frac{w(e)-d(u)-d(v)}{\kappa(S_u)+\kappa(S_v)}\right\}$;
			$e^*=u^*v^*\leftarrow \arg\varepsilon_1$;\label{line1222-1}
			\STATE $\varepsilon_2\leftarrow\min_{S\in\mathcal{C}:\kappa(S)=1}\left\{\sum_{v\in S}\pi(v)-h(S)\right\}$;
			$S^*\leftarrow \arg\varepsilon_2$;
			\STATE $\varepsilon\leftarrow\min\{\varepsilon_1,\varepsilon_2\}$; \label{line1222-2}
			\FORALL{$S\in \mathcal{C}$ with $\kappa(S)=1$}
			\STATE $y_S\leftarrow y_S+\varepsilon$, $h(S)\leftarrow h(S)+\varepsilon$;  $\forall v\in S$, $d(v)\leftarrow d(v)+\varepsilon$;
			\ENDFOR
				\IF{$\varepsilon_2<\varepsilon_1$}
					\STATE $\kappa(S^*)\leftarrow 0$
				\ELSE
					\STATE $F\leftarrow F+e^*$, $S\leftarrow S_{u^*}+S_{v^*}+e^*$, $\kappa(S)\leftarrow 1$, $\mathcal{C}\leftarrow (\mathcal{C}-\{S_{u^*},S_{v^*}\})\cup \{S\}$;
				\ENDIF
		\ENDWHILE
		\STATE \textbf{//$K$-Forest-Step}
		\STATE $F_R\leftarrow$ the set of  components of $\mathcal C$ containing roots;\label{line1222-5}
			\WHILE{some component $S_R\in F_R$ contains more than one root }
        \STATE $r_i,r_j\leftarrow$ two nearest roots on $S_R$;
		\STATE	$P_{ij}\leftarrow$ the unique path in $S_R$ connecting $r_i,r_j$;
		\STATE $e_{ij}\leftarrow$ the last edge added into $F$ among all edges of $P_{ij}$;
		\STATE $F_R\leftarrow F_R-e_{ij}$
		\ENDWHILE
	
		\STATE \textbf{// Reverse-Deletion}
		\FOR{each $T_r\in F_R$ with $|V(T_r)|\geq 2$}
		\FOR{edges $e$ in $T_r$ in reverse order of their additions into $F$}
			\IF{$e$ cuts an inactive leaf component $C_e$ of $T_r$ when $e$ is added}
				\STATE $T_r\leftarrow T_r-e-C_e$
			\ENDIF
		\ENDFOR
		\ENDFOR
		\RETURN $F_R\leftarrow\{T_r\}_{r\in R}$
	\end{algorithmic}\label{algo1}
\end{algorithm}

In the algorithm, $F$ is the set of added edges and $\mathcal C$ is the set of connected components of $(V(G),F)$ (the spanning subgraph of $G$ induced by $F$). For the ease of statement, we will use $S$ to refer to both a component of $\mathcal C$ and its vertex set. A dual variable $y_S$ is called an {\em effective dual variable} if $S$ is a component of $\mathcal C$. We use $\kappa(S)$ to indicate whether $S\in\mathcal C$ is an {\em active component} or not, i.e.,
	\begin{equation*}
		\kappa(S)=\left\{\begin{array}{ll}
			1, &\mbox{if $S$ is an active componnt},\\
			0, &\mbox{if $S$ is an inactive componnt}. \end{array}\right.
	\end{equation*}
A dual variable $y_S$ corresponding to an active component $S\in\mathcal C$ is called an {\em active effective dual variable}. Only active effective dual variables can have their values increased.

Initially, $F=\emptyset$, every vertex $v\in V$ is a trivial active component of $\mathcal{C}$, and all dual variables are implicitly set to be zeros. In each iteration, all active effective dual variables increase their values simultaneously until some constraint of the dual LP \eqref{eq1221-1} becomes tight. A constraint with the form \eqref{eq1221-2} is called an {\em edge-constraint}.
And for any $T\subseteq V$ (including the vertex set containing some root), a constraint with the form \eqref{eq1221-3} is called a {\em set-constraint}. For simplicity of statement, we also say an edge becomes tight or a vertex set becomes tight. If an edge $e$ becomes tight, then $e$ is added into $F$, and the two components containing the two ends of $e$ are merged into one component of $\mathcal C$. If a vertex set $T$ becomes tight, then $T$ is {\em deactivated} and its dual variable $y_T$ is {\em frozen} (i.e., $y_T$ will never increase anymore).

In order to detect the first constraint that becomes tight in polynomial time, we use $d(v)$ to record the accumulated amount of increase on vertex $v$, i.e.,
$$d(v)=\sum_{S:v\in e\in\delta(S)}y_S,$$
where $v\in e$ means $v$ is incident with edge $e$. For a vertex set $T$, we use $h(T)$ to record the accumulated amount of increase on those subsets of $T$, i.e.,
$$h(T)=\sum_{S:S\subseteq T}y_S.$$
With the help of these notations, an edge-constraint for edge $e=uv$ can be written as
\begin{equation}\label{eq1221-4}
d(u)+d(v)\leq w(e),
\end{equation}
and a set-constraint for vertex set $T$ can be written as
\begin{equation}\label{eq1221-5}
h(T)\leq \pi(T)
\end{equation}
Suppose in an iteration, the increase of each active effective dual variable is $\varepsilon$. If it is edge $e=uv$ that first becomes tight, let $S_u$ and $S_v$ be the two components of $\mathcal C$ at the beginning of this iteration, that contain $u$ and $v$, respectively, then
$$
d(u)+d(v)+\varepsilon (\kappa(S_u)+\kappa(S_v))=w(e).
$$
Hence,
$$
\varepsilon=\frac{w(e)-d(u)-d(v)}{\kappa(S_u)+\kappa(S_v)}.
$$
If it is vertex set $T$ that first becomes tight, let $\mathcal A$ (resp. $\mathcal I$) be the set of active (resp. inactive) components of $\mathcal C$ at the beginning of this iteration that are contained in $T$. Note that $\mathcal A\neq\emptyset$ since otherwise $h(T)$ will not increase. For any $S_I\in\mathcal{I}$, it is deactivated in a previous iteration, and thus
\begin{equation}\label{eq1222-2}
h(S_I)= \pi(S_I).
\end{equation}
In this iteration, every $S_A\in\mathcal{A}$ has its dual variables $y_{S_A}$ increased by $\varepsilon$. So,
\begin{equation}\label{eq1222-3}
h(T)+\varepsilon\cdot|\mathcal{A}|=\pi(T).
\end{equation}
Combining \eqref{eq1222-2} and \eqref{eq1222-3} with the observation
\begin{align*}
& h(T)=\sum_{S_A:S_A\in\mathcal{A}}h(S_A)+\sum_{S_I:S_I\in\mathcal{I}}h(S_I) \ \mbox{and}\\
& \pi(T)=\sum_{S_A:S_A\in\mathcal{A}}\pi(S_A)+\sum_{S_I:S_I\in\mathcal{I}}\pi(S_I),
\end{align*}
we have
\begin{equation}\label{eq1222-11}
\sum_{S_A:S_A\in\mathcal{A}}h(S_A)+\varepsilon\cdot|\mathcal{A}|=\sum_{S_A:S_A\in\mathcal{A}}(h(S_A)+\varepsilon)=\sum_{S_A:S_A\in\mathcal{A}}\pi(S_A).
\end{equation}
Since the algorithm keeps a {\em feasible} dual solution to LP \eqref{eq1221-1}, any $S_A\in\mathcal A$ satisfies
$$
h(S_A)+\varepsilon\leq \pi(S_A).
$$
Hence the second equality in \eqref{eq1222-11} implies
$$
h(S_A)+\varepsilon=\pi(S_A) \ \mbox{for any $S_A\in\mathcal A$},
$$
that is, every active component of $\mathcal C$ that is contained in $T$ becomes tight at the same time as $T$. So, to find out the first vertex set that becomes tight, it is sufficient to consider those active components, and the increased amount can be determined by
$$
\varepsilon=\min_{S:\mbox{\scriptsize active component of} \ \mathcal C}\{\pi(S)-h(S)\}.
$$
The above arguments are the theoretical basis for line \ref{line1222-1} to line \ref{line1222-2} of Algorithm \ref{algo1}.

The rootless-growth step stops when all components in $\mathcal{C}$ are inactive. Let $F_R$ be the set of components of $\mathcal C$ containing roots. If a component of $F_R$ contains more than one roots, then the $K$-forest step breaks it into smaller components. This is realized by the following method. Suppose a component $S_R\in F_R$ contains more than one roots; since $S_R$ is a tree, any pair of roots $r_i,r_j\in V(S_R)$ are linked by a unique path in $S_R$, denote this path as $P_{ij}$; choose roots $r_i,r_j\in V(S_R)$ such that there is no other roots on $P_{ij}$; among all edges of $P_{ij}$, delete the edge $e_{ij}$ which is the last one added into $F$; replace $S_R$ by the two components of $S_R-e_{ij}$ in the new forest $F_R$; proceed in this way until each component of $F_R$ contains exactly one root. The algorithm uses $E_K$ to denote the set of removed edges. As to why the ``last'' edge is removed for breaking, we shall add a remark at the end of Section \ref{sec1029-1}.
%

The final output is obtained from $F_R$ by a {\em reverse deletion} step described as follows. For each component $T_r$ of $F_R$, consider its edges in reverse order of their additions into $F$ and  decide whether an edge $e$ and some component incident with $e$ can be deleted or not from $T_r$ according to the following rule: let $\mathcal C_e$ be the collection of components just before $e$ is added into $F$; adding $e$ will merge two components of $\mathcal C_e$; if one of them

$(a)$ is inactive,

$(b)$ does not contain root, and

$(c)$ is not incident with any other edge in the final forest $F_R$,

\noindent then remove $e$ and this component (call it a {\em  rootless inactive leaf component}) from $T_r$.

An example of a constructed subtree $T_r$ is shown in Fig.~\ref{fig1230-1}, in which a tight edge is marked as $e_i$ and a deactivated component is marked as $S_i$, and the indices indicate the temporal order of events, i.e., $\{e_1,e_2,S_3,e_4,S_5,e_6,S_7,e_8\}$.

\begin{figure}[htpb]
	\begin{center}
		\begin{tikzpicture}
			\filldraw(0,0)circle(2pt);
			\filldraw(0,2)circle(2pt);
			\filldraw(2,0)circle(2pt);
			\filldraw(2,2)circle(2pt);
			\filldraw(4,0)circle(2pt);
			\filldraw(4,2)circle(2pt);
\draw[line width = 1pt](2,0)--(2,2);
\draw[line width = 1.5pt](2,1)ellipse(0.5 and 1.5);
\node (r1) at (2.2,1) {$e_1$};
\draw[line width = 1pt](4,0)--(4,2);
\draw[line width = 0.5pt](4,1)ellipse(0.5 and 1.5);
\node (r1) at (4.2,1) {$e_2$};\node (r1) at (4.6,1.7) {$S_3$};
\draw[line width = 1pt](2,2)--(4,2);\node (r1) at (3,2.2) {$e_4$};
\draw[line width = 1.5pt](3,1)circle(2);
\draw[line width = 0.5pt](0,2)circle(8pt);\node (r1) at (0.2,2.2) {$S_5$};
\draw[line width = 1pt](0,0)--(0,2);\node (r1) at (-0.2,1) {$e_6$};
\node (r1) at (-0.2,0) {$r$};
\draw[line width = 0.5pt](0,1)ellipse(0.6 and 1.8);\node (r1) at (0.5,-0.3) {$S_7$};
\draw[line width = 1pt](2,0)--(0,2);\node (r1) at (0.8,1.4) {$e_8$};
		\end{tikzpicture}
	\end{center}
\caption{A constructed subtree $T_r$. Blackened circles are active components, light circles indicate deactivated components.}\label{fig1230-1}
\end{figure}
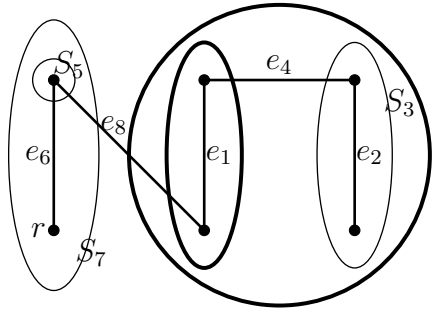

The reverse-deletion on $T_r$ is given in Fig.~\ref{fig4}. According to ``reverse'' deletion rule, the first edge considered is $e_8$. Although it cuts an inactive leaf component, the component contains a root, so $e_8$ is kept. Then consider edge $e_6$. The leaf component cut by $e_6$ is active, so $e_6$ is kept. For $e_4$, it cuts a rootless inactive leaf component, namely $S_3$, so both $e_4$ and $S_3$ are removed. Finally, for edge $e_1$, both components incident with $e_1$ are active, so $e_1$ is intact.

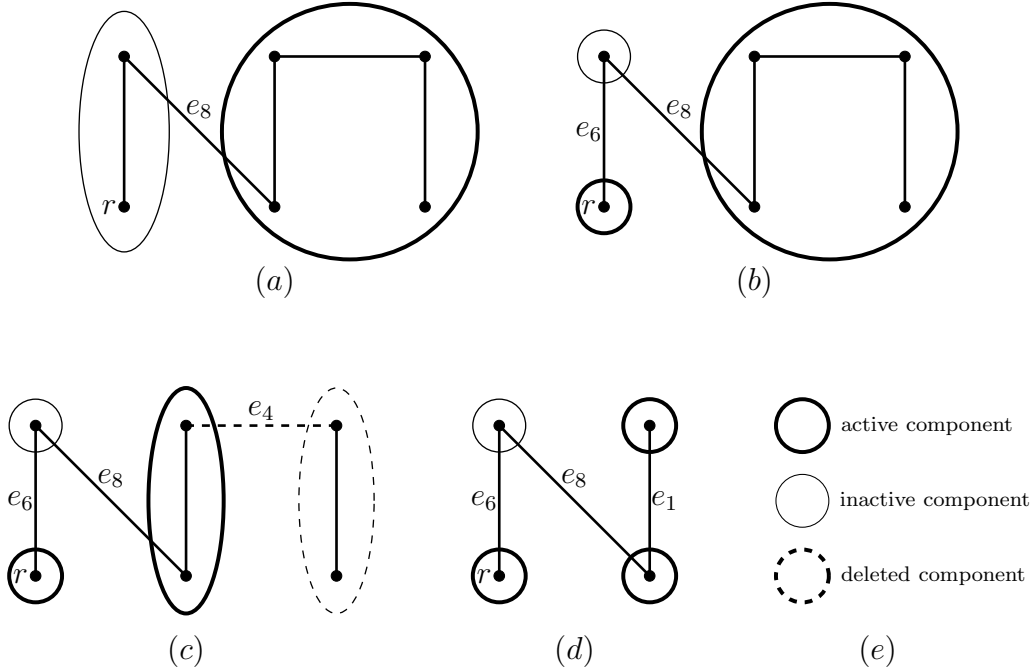
\begin{figure}[htpb]
			\begin{center}
		\begin{tikzpicture}
			\filldraw(0,0)circle(2pt);
\filldraw(0,2)circle(2pt);
\filldraw(2,0)circle(2pt);
\filldraw(2,2)circle(2pt);
\filldraw(4,0)circle(2pt);
\filldraw(4,2)circle(2pt);
\draw[line width = 1pt](2,0)--(2,2);
\draw[line width = 1pt](4,0)--(4,2);
\draw[line width = 1pt](2,2)--(4,2);
\draw[line width = 1.5pt](3,1)circle(1.7);
\draw[line width = 1pt](0,0)--(0,2);
\node (r1) at (-0.2,0) {$r$};
\draw[line width = 0.5pt](0,1)ellipse(0.6 and 1.6);
\draw[line width = 1pt](2,0)--(0,2);\node (r1) at (1,1.3) {$e_8$};
\node (r1) at (2,-1) {$(a)$};
		\end{tikzpicture}
		\hskip 1cm
		\begin{tikzpicture}
			\filldraw(0,0)circle(2pt);
\filldraw(0,2)circle(2pt);
\filldraw(2,0)circle(2pt);
\filldraw(2,2)circle(2pt);
\filldraw(4,0)circle(2pt);
\filldraw(4,2)circle(2pt);
\draw[line width = 1pt](2,0)--(2,2);
\draw[line width = 1pt](4,0)--(4,2);
\draw[line width = 1pt](2,2)--(4,2);
\draw[line width = 1.5pt](3,1)circle(1.7);
\draw[line width = 1pt](0,0)--(0,2);
\node (r1) at (-0.2,0) {$r$};
\draw[line width = 1pt](2,0)--(0,2);\node (r1) at (1,1.3) {$e_8$};
\node (r1) at (-0.2,1) {$e_6$};
\draw[line width = 1.5pt](0,0)circle(10pt);
\draw[line width = 0.5pt](0,2)circle(10pt);
\node (r1) at (2,-1) {$(b)$};
		\end{tikzpicture}
		\vskip 1cm
		\begin{tikzpicture}
			\filldraw(0,0)circle(2pt);
\filldraw(0,2)circle(2pt);
\filldraw(2,0)circle(2pt);
\filldraw(2,2)circle(2pt);
\filldraw(4,0)circle(2pt);
\filldraw(4,2)circle(2pt);
\draw[line width = 1pt](2,0)--(2,2);
\draw[line width = 1pt](4,0)--(4,2);
\draw[line width = 1pt,dashed](2,2)--(4,2);
\draw[line width = 1pt](0,0)--(0,2);
\node (r1) at (-0.2,0) {$r$};
\draw[line width = 1pt](2,0)--(0,2);\node (r1) at (1,1.3) {$e_8$};
\node (r1) at (-0.2,1) {$e_6$};
\draw[line width = 1.5pt](0,0)circle(10pt);
\draw[line width = 0.5pt](0,2)circle(10pt);
\node (r1) at (3,2.2) {$e_4$};
\draw[line width = 1.5pt](2,1)ellipse(0.5 and 1.5);
\draw[line width = 0.5pt,dashed](4,1)ellipse(0.5 and 1.5);
\node (r1) at (2,-1) {$(c)$};
		\end{tikzpicture}
		\hskip 1cm
		\begin{tikzpicture}
			\filldraw(0,0)circle(2pt);
\filldraw(0,2)circle(2pt);
\filldraw(2,0)circle(2pt);
\filldraw(2,2)circle(2pt);
\draw[line width = 1pt](2,0)--(2,2);
\draw[line width = 1pt](0,0)--(0,2);
\node (r1) at (-0.2,0) {$r$};
\draw[line width = 1pt](2,0)--(0,2);\node (r1) at (1,1.3) {$e_8$};
\node (r1) at (-0.2,1) {$e_6$};
\draw[line width = 1.5pt](0,0)circle(10pt);
\draw[line width = 0.5pt](0,2)circle(10pt);
\draw[line width = 1.5pt](2,0)circle(10pt);
\draw[line width = 1.5pt](2,2)circle(10pt);
\node (r1) at (2.2,1) {$e_1$};
\node (r1) at (1,-1) {$(d)$};
		\end{tikzpicture}
		\hskip 1cm
		\begin{tikzpicture}
			\draw(0,1)[line width=1.5pt]circle(10pt)node[right=10pt]{\scriptsize active component};
			\draw(0,0)circle(10pt)node[right=10pt]{\scriptsize inactive component};
			\draw[line width=1.5pt,dashed](0,-1)circle(10pt)node[right=10pt]{\scriptsize deleted component};
			\node (r1) at (1,-2) {$(e)$};

		\end{tikzpicture}
	\end{center}
	\caption{An illustration of reverse-deletion.}\label{fig4}
\end{figure}

\subsubsection{Performance analysis}\label{sec1029-1}

\begin{theorem}\label{theo1}
{\rm	Algorithm~\ref{algo1} is a 2-LMP for RPCF$_K$ and runs in time $O(nm)$, where $n$ is the number of vertices and $m$ is the number of edges.}
\end{theorem}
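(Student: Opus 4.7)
The plan is to prove Theorem \ref{theo1} in two parts, with most effort devoted to the $2$-LMP guarantee. The approximation argument follows the classical primal-dual template against the strengthened dual LP \eqref{eq1221-1}: first show that the $\{y_S\}$ produced by rootless-growth is feasible for \eqref{eq1221-1}, so that weak duality gives $\sum_{S\subseteq V} y_S\leq opt_{I_R}$; then establish the charging bound
$$ w(F_R)+2\pi(V\setminus V(F_R))\leq 2\sum_{S\subseteq V} y_S, $$
from which the theorem follows.

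Feasibility is a loop invariant. The quantities $\varepsilon_1$ and $\varepsilon_2$ at lines \ref{line1222-1}--\ref{line1222-2} are precisely the slacks on the edge-constraint \eqref{eq1221-4} and set-constraint \eqref{eq1221-5} rewritten in terms of the running totals $d(v)$ and $h(S)$, so $\varepsilon=\min\{\varepsilon_1,\varepsilon_2\}$ is the largest uniform increment that preserves feasibility. The reasoning culminating in equation \eqref{eq1222-11} further justifies that restricting attention to effective dual variables (those attached to current elements of $\mathcal C$) is without loss.

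The heart of the proof is the amortized charging. Expanding $w(F_R)=\sum_{S} y_S|\delta_{F_R}(S)|$ via tight edge-constraints, and $\pi(V\setminus V(F_R))=\sum_{D} h(D)=\sum_{D}\sum_{S\subseteq V(D)} y_S$ where $D$ ranges over maximal \emph{dead} subtrees (non-rooted final components of $\mathcal C$ and subtrees removed by reverse-deletion, each saturating its set-constraint at the moment of deactivation), the goal reduces to the per-moment inequality
$$ \sum_{C\in\mathcal A_\tau}|\delta_{F_R}(C)|+2|\mathcal D_\tau|\leq 2|\mathcal A_\tau|, $$
where $\mathcal D_\tau\subseteq\mathcal A_\tau$ is the set of active components at time $\tau$ whose vertex set lies inside some dead region. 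Contracting each element of $\mathcal C_\tau$ inside $F_R$ yields a forest; its average degree is below $2$, and the reverse-deletion rule is engineered precisely so that any surviving \emph{inactive} node in this contracted forest has degree at least $2$, so the slack on the inactive side absorbs the deficit on the active side and the inequality closes.

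The main obstacle is ensuring this contracted-forest argument is robust to the $K$-forest step, which excises edges $e_{ij}$ \emph{after} the dual is already fixed. The rule of deleting the \emph{last} edge added along $P_{ij}$ is exactly what keeps the analysis intact: for $\tau\leq\tau_{e_{ij}}$ the endpoints of $e_{ij}$ already lie in different elements of $\mathcal C_\tau$, so the deletion does not change which $F_R$-edges cross which $\mathcal C_\tau$-components; and for $\tau>\tau_{e_{ij}}$ the two sides of $e_{ij}$ appear as disjoint fragments of the same $\mathcal C_\tau$-component in the contracted graph, keeping it a forest and preserving the degree count. I would isolate this as a short structural lemma that doubles as the remark promised at the end of Section \ref{sec1029-1}. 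For the running time: rootless-growth triggers at most $O(n)$ events (each edge added or component deactivated contributes one), and each event recomputes $\varepsilon_1$ in $O(m)$ and $\varepsilon_2$ in $O(n)$, yielding $O(nm)$; the $K$-forest and reverse-deletion phases run in $O(n^2)$ and $O(n)$ respectively, both absorbed.
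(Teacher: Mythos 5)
Your template (dual feasibility as a loop invariant, charging tight constraints, a per-moment degree inequality over the condensation forest, degree-$\geq 2$ for surviving inactive nodes) is the same as the paper's, but there is a genuine gap at the very first step: weak duality does \emph{not} give $\sum_{S\subseteq V}y_S\leq opt_{I_R}$. The strengthened dual \eqref{eq1221-1} has objective $\sum_{S\subseteq V-R}y_S$; only the root-free part of the dual vector is certified against the optimum, so the correct bound is $\sum_{S\subseteq V-R}y_S\leq opt_{I_R}$. The distinction matters because rootless growth treats roots as ordinary active vertices, so substantial dual mass accumulates on root-containing components and that mass is not chargeable to $opt$: on a two-vertex instance with root $r$, $\pi(r)$ huge, $\pi(v)$ tiny and $w(rv)$ moderate, the algorithm's dual has $\sum_{S\subseteq V}y_S\approx\pi(r)\gg opt_{I_R}=\pi(v)$, while $\sum_{S\subseteq V-R}y_S=\pi(v)$. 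Consequently your charging bound must be proved against $2\sum_{S\subseteq V-R}y_S$, which means the per-moment inequality needs an extra subtracted term for active root-containing components, i.e.\ the paper's \eqref{eq109-1} with $-|N_{a,r}|$, not your $\sum_{C\in\mathcal A_\tau}|\delta_{F_R}(C)|+2|\mathcal D_\tau|\leq 2|\mathcal A_\tau|$. Absorbing that term is exactly where the paper needs its Claim 1 (each component of the condensation forest contains at most one root-node, hence $\omega(H^{>0})\geq|N_r^{>0}|$), and Claim 1 is the true purpose of deleting the \emph{last-added} edge in the $K$-forest step.

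Two related corrections. First, your stated reason for the last-edge rule ("keeping the contracted graph a forest and preserving the degree count") is not the issue: the contraction of the components of $\mathcal C_\tau$ is automatically a forest, because every $F_R$-edge crossing those components was added after time $\tau$ and each addition merges two distinct components; deleting edges can only help. Second, even your weaker per-moment inequality silently relies on Claim 1: reverse deletion never removes leaf components that contain roots, so inactive degree-one nodes containing roots can survive in $H$, and without a bound of one root-node per component (e.g., if the $K$-forest step deleted the heaviest edge instead of the last-added one) a star whose center is the only active node and whose leaves are inactive root-nodes violates $\sum_{v\in N_a}d_H(v)\leq 2|N_a^{>0}|$. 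Your feasibility argument and the $O(nm)$ running-time analysis are fine and match the paper.
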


\begin{proof}
Note that throughout the algorithm, $\{y_S\}_{S\subseteq V}$ is a feasible solution to the relaxed dual LP \eqref{eq1221-1}. By the remark at the end of Section \ref{RLP}, we have
\begin{equation}\label{eq1228-2}
\sum_{S\subseteq V-R}y_S\leq opt,
\end{equation}
where $opt$ is the optimal value for the RPCF$_K$ instance.
	Let $X$ be the set of vertices not spanned by the final output forest $F_R$. To prove the theorem, it suffices to prove
	\begin{align}\label{eq108-1}
		\sum_{e\in F_R}w(e)+2\sum_{v\in X}\pi(v)\leq 2\sum_{S\subseteq V-R}y_S.
	\end{align}
Observe that the set $X$ is the disjoint union of some inactive components, because only inactive components can be deleted in the reverse deletion step. Suppose \begin{equation}\label{eq1224-11}
X=V(C_1)\cup\cdots\cup V(C_q),
\end{equation}
where $C_1,\ldots,C_q$ are inactive components. Since a vertex set is deactivated only when its corresponding constraint becomes tight, we have
\begin{align}\label{eq108-2}
	\sum_{v\in X}\pi(v)=\sum_{j=1}^{q}\sum_{v\in V(C_j)}\pi(v)=\sum_{j=1}^{q}\sum_{S\subseteq V(C_j)}y_S.
\end{align}
Since edge $e$ is added only when the corresponding edge constraint becomes tight, we have
\begin{align}\label{eq108-3}
	\sum_{e\in F_R}w(e)=\sum_{e\in F_R}\sum_{S:e\in\delta(S)}y_S=\sum_{S\subseteq V}y_S\cdot|F_R\cap \delta(S)|.
\end{align}
Therefore, to prove \eqref{eq108-1}, it suffices to prove
\begin{align}\label{eq108-4}
	\sum_{S\subseteq V}y_S\cdot|F_R\cap \delta(S)|+2\sum_{j=1}^{q}\sum_{S\subseteq V(C_j)}y_S\leq 2\sum_{S\subseteq V-R}y_S.
\end{align}

Denote by $\{y_S^{(i)}\}_{S\subseteq V}$ the set of dual variables at the end of the $i$-th iteration of the first while-loop. We prove inequality \eqref{eq108-4} (replace every $y_S$ by $y_S^{(i)}$) by induction on $i$. Inequality \eqref{eq108-4} trivially holds for $i=0$ because $\{y_v^{(0)}\}_{v\in V}=\{0\}_{v\in V}$. Assume that \eqref{eq108-4} holds for $(i-1)$, that is,
\begin{align}\label{eq108-5}
	\sum_{S\subseteq V}y_S^{(i-1)}\cdot|F_R\cap \delta(S)|+2\sum_{j=1}^{q}\sum_{S\subseteq V(C_j)}y_S^{(i-1)}\leq 2\sum_{S\subseteq V-R}y_S^{(i-1)}.
\end{align}
We shall prove that during the $i$-th iteration,
\begin{equation}\label{eq1224-13}
\begin{array}{c}
\mbox{the increase to the left-hand side of inequality \eqref{eq108-5}}\\ \mbox{is no more than the increase in the right-hand side.}
\end{array}
\end{equation}

For this purpose, construct a {\em condensation graph} $H$ as follows: consider the collection of components $\mathcal C^{(i)}$ at the beginning of the $i$-th iteration; contract every component in $\mathcal C^{(i)}$ into a super node; keep all those edges $e$ with the form $e\in(\delta (S)\cap F_R)$ for some component $S\in\mathcal C^{(i)}$, where $F_R$ is the final forest output by the algorithm.
For clarity of statement, we call vertices of $H$ as nodes.
Note that some component $S\in\mathcal C^{(i)}$ may contain more than one roots. See Fig. \ref{fig1224-1} for an illustration. There are five components in $\mathcal C^{(i)}$. The blackened edges belong to the final forest $F_R$. The condensation graph is depicted in Fig. \ref{fig1224-1} $(b)$. Note that $S_1$ contains two roots, and thus the $K$-forest step has to remove some edge from the path linking them (in this figure, the dashed edge $\hat e$ is removed).

\begin{figure}[htpb]
\begin{center}
\begin{picture}(130,90)
\put(20,30){\oval(20,30)}\put(20,22){\circle*{5}}\put(20,38){\circle*{5}}\put(20,22){\line(0,1){16}}
\put(50,15){\oval(30,20)}\put(42,15){\circle*{5}}\put(58,15){\circle*{5}}\put(42,15){\linethickness{0.5mm}\line(1,0){16}}
\put(80,15){\oval(20,20)}\put(80,16){\circle*{5}}{\linethickness{0.5mm}\qbezier(80,16)(92,19)(104,22)}
\put(104,30){\oval(20,30)}\put(104,22){\circle*{5}}\put(104,38){\circle*{5}}\put(104,22){\line(0,1){16}}
\put(65,65){\oval(100,35)}\put(30,54){\circle*{5}}\put(30,74){\circle*{5}}\put(50,64){\circle*{5}}\put(80,64){\circle*{5}}\put(100,74){\circle*{5}}
\put(30,54){\line(2,1){20}}\multiput(54,64)(4,0){6}{\line(1,0){2.5}}
{\linethickness{0.5mm}\qbezier(30,74)(40,69)(50,64)\qbezier(80,64)(90,69)(100,74)}
\put(5,55){$S_1$}\put(-1,18){$S_2$}\put(113,18){$S_3$}\put(27,2){$S_4$}\put(87,2){$S_5$}
\put(46,55){$r_1$}\put(74,55){$r_2$}\put(62,66){$\widehat{e}$}\put(40,18){$r_3$}\put(74,8){$r_4$}
\put(60,-10){(a)}
\end{picture}
\hskip 1cm\begin{picture}(70,90)
\put(40,60){\circle*{5}}\put(70,40){\circle*{5}}\put(50,20){\circle*{5}}
\put(20,20){\circle*{5}}\put(0,40){\circle*{5}}
\put(70,40){\line(-1,-1){20}}
\put(44,58){$S_1$}\put(4,37){$S_2$}\put(73,38){$S_3$}\put(21,12){$S_4$}\put(51,12){$S_5$}
\put(32,-10){(b)}
\end{picture}
\vskip 0.2cm
\caption{An illustration of the condensation graph. Circles indicate components of $\mathcal C^{(i)}$. Roots are labeled by $r$. Blackened edges belong to the final forest $F_R$. The dashed edge $\hat e$ is removed in the $K$-forest step. }\label{fig1224-1}
\end{center}
\end{figure}
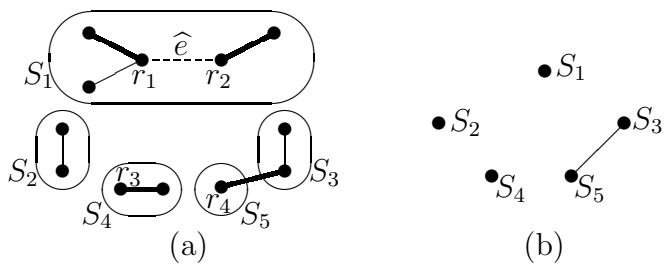

For a node $v\in V(H)$, denote by $C_v$ the component of $\mathcal C^{(i)}$ corresponding to $v$. We call $v$ a {\em root-node} if $C_v$ contains a root. The following claim plays an important role in analyzing the approximation ratio.

\vskip 0.2cm
{\bf Claim 1.} Every connected component of graph $H$ has at most one root-node.

Suppose $H'$ is a connected component of $H$ having more than one root-nodes. Blossom $H'$, that is, replace every node $v\in V(H')$ by its corresponding component $C_v$. Note that the blossomed graph, denoted as $G'$, is still a tree, whose edges come from $F_R^b$, the forest at the end of the rootless-growth step, before deleting edges. Consider two nearest roots $r_j,r_k$ in $G'$ belonging to different root-nodes $v_j$ and $v_k$. There is a unique path $P_{jk}$ in $G'$ linking $r_j$ and $r_k$. By the $K$-forest step, an edge $e_{jk}$ is removed. Remember that $e_{jk}$ is the {\em last} edge on $P_{jk}$ added into $F$. Also note that
\begin{equation}\label{eq1227-1}
\begin{array}{c}
\mbox{all those edges within components of $\mathcal C^{(i)}$ are added}\\
\mbox{{\em before} any edge crossing components of $\mathcal C^{(i)}$.}
\end{array}
\end{equation}
So, $e_{jk}$ must be an edge between two components of $\mathcal C^{(i)}$. But edges between components of $\mathcal C^{(i)}$ belong to the final output $F_R$, such edges are kept instead of being deleted. This contradiction proves the claim.

\vskip 0.2cm
Now, we continue to prove property \eqref{eq1224-13}. Denote by $\mathcal C_a^{(i)}$ and $\mathcal C_{in}^{(i)}$ the set of active and inactive components of $\mathcal C^{(i)}$, respectively. Denote by $\mathcal C_{a,r}^{(i)}$ and $\mathcal C_{in,r}^{(i)}$ (resp. $\mathcal C_{a,\neg r}^{(i)}$ and $\mathcal C_{in,\neg r}^{(i)}$) the subsets of components of $\mathcal C_a^{(i)}$ and $\mathcal C_{in}^{(i)}$ containing (resp. not containing) roots, respectively. Denote by $\mathcal C_{a,d}^{(i)}$ the subset of components of $\mathcal C_a^{(i)}$ that will be contained in $C_j$ for some $j\in\{1,\ldots,q\}$ (recall that $C_1,\ldots,C_q$ are inactive components whose union is $X$, see expression \eqref{eq1224-11}).
Let the corresponding sets of contracted nodes be $N_a^{(i)},N_{in}^{(i)},N_{a,r}^{(i)},N_{in,r}^{(i)},N_{a,\neg r}^{(i)},N_{in,\neg r}^{(i)}$ and $N_{a,d}^{(i)}$, respectively.
For a node $v\in V(H)$, denote by $d_H(v)$ the degree of $v$ in $H$. An isolated node in $H$ is called a {\em trivial component} of $H$.

\vskip 0.2cm
{\bf Claim 2.} Any node $v\in N_{in,\neg r}^{(i)}$ in a non-trivial component of $H$ has $d_H(v)\geq 2$.

Since $v$ is in a non-trivial component of $H$, we have $d_H(v)\geq 1$. Suppose $d_H(v)=1$, and $e$ is the unique edge of $H$ incident with $v$. By the observation in \eqref{eq1227-1}, $e$, being an edge crossing components of $\mathcal C^{(i)}$, is added later than those edges in $\mathcal C^{(i)}$. So, every component of $\mathcal C^{(i)}$ is contained in some component of $\mathcal C_e$. Suppose $C_e$ is the component of $\mathcal C_e$ containing $C_v$. By the reverse-deletion rule, in order that $e$ is kept in the final forest $F_R$, either $C_e$ is an active component in $\mathcal C_e$, or $C_e$ contains a root, or $C_e$ is incident with some other edge in $F_R$ different from $e$. In either case, $C_v$ is a proper subset of $C_e$. This is because: in the first case, $C_e$ is active but $C_v$ is not; in the second case, $C_e$ contains a root but $C_v$ does not; in the third case, $e$ is the unique edge of $F_R$ incident with $C_v$, and thus the other edge of $F_R$ incident with $C_e$ must be incident with some vertex in $C_e-C_v$. In order to expand $C_v$ into $C_e$, some edge $e'$ is added into $F$ to merge $C_v$ with an active component $C_{e'}\in \mathcal C_{e'}$, furthermore, $e'$ is added before $e$ and $C_{e'}$ is contained in $C_e$. Since $e$ is the unique edge of final $F_R$ incident with $C_v$, the edge $e'$ must be deleted in the reverse-deletion step, together with an inactive component in $\mathcal C_{e'}$ incident with $e'$. Note that this inactive component must be $C_v$ because $C_{e'}$ is active. But then, the end of $e$ in $C_v$ is deleted, and thus $e$ cannot appear in the final $F_R$. This contradiction establishes the claim.

\vskip 0.2cm
The following argument is concentrated on the $i$th iteration, and the superscript $(i)$ is omitted to simplify the notation without causing ambiguity.


Suppose in the $i$th iteration, every active effective dual variable is increased by an amount of $\varepsilon$. Then the increase to the left-hand side of \eqref{eq108-5} is
\begin{align*}
\sum_{S\in \mathcal C_a}\varepsilon\cdot|F_R\cap\delta(S)|+2\varepsilon\cdot|\mathcal C_{a,d}|=\varepsilon\cdot\sum_{v\in N_a}d_{H}(v)+2\varepsilon\cdot|N_{a,d}|,
\end{align*}
where $d_H(v)$ is the degree of node $v$ in $H$. The increase to the right-hand side of \eqref{eq108-5} is
\begin{align*}
2\varepsilon |\mathcal C_a\setminus \mathcal C_{a,r}|= 2\varepsilon(|N_a|-|N_{a,r}|),
\end{align*}
So, to prove \eqref{eq1224-13}, it suffices to prove
\begin{align}\label{eq109-1}
	\sum_{v\in N_a}d_{H}(v)\leq 2(|N_a|-|N_{a,r}|-|N_{a,d}|).
\end{align}

To estimate the left side of \eqref{eq109-1}, we may concentrate on those nodes of $H$ whose degree is greater than zero. In the following, we use a superscript $>0$ to indicate such nodes. For example, $N_a^{>0}$ denotes the set of active nodes in $H$ whose degree is greater than zero, and let $N_a^{=0}=N_a\setminus N_a^{>0}$. The left side of \eqref{eq109-1} can be written as
\begin{equation}\label{eq1227-4}
\sum_{v\in N_a}d_{H}(v)= \sum_{v\in N_a^{>0}}d_{H}(v)= \sum_{v\in N_a^{>0}\cup N_{in}^{>0}}d_{H}(v)-\sum_{v\in N_{in}^{>0}}d_{H}(v).
\end{equation}
Denote by $V(H^{>0})$ the set of nodes belonging to non-trivial components of $H$. Note that
\begin{equation}\label{eq1227-8}
V(H^{>0})=N_a^{>0}\cup N_{in}^{>0}.
\end{equation}
Combining this with the {\em shaking-hands lemma}, an because $H^{>0}$ is a forest, we have
\begin{equation}\label{eq1227-5}
\sum_{v\in N_a^{>0}\cup N_{in}^{>0}}d_{H}(v)=\sum_{v\in V(H^{>0})}d_{H}(v)=2|E(H^{>0})|=2 (|V(H^{>0})|-\omega(H^{>0})),
\end{equation}
where $\omega(H^{>0})$ is the number of non-trivial components in $H$. By Claim 1, \begin{equation}\label{eq1227-6}
\omega(H^{>0})\geq |N_{r}^{>0}|.
\end{equation}
By Claim 2, every node $v\in N_{in,\neg r}^{>0}$ has $d_H(v)\geq 2$. So,
\begin{equation}\label{eq1227-7}
\sum_{v\in N_{in}^{>0}}d_{H}(v)=\sum_{v\in N_{in,\neg r}^{>0}}d_{H}(v)+\sum_{v\in N_{in,r}^{>0}}d_{H}(v)\geq 2|N_{in,\neg r}^{>0}|+|N_{in,r}^{>0}|.
\end{equation}
Combining expressions \eqref{eq1227-4} to \eqref{eq1227-7}, we have
\begin{align}\label{eq1228-1}
\sum_{v\in N_a}d_H(v) & \leq 2(|V(H^{>0})|-\omega(H^{>0}))-2|N_{in,\neg r}^{>0}|-|N_{in,r}^{>0}| \\ \nonumber
& \leq 2(|N_a^{>0}|+|N_{in}^{>0}|-|N_r^{>0}|-|N_{in}^{>0}|+|N_{in,r}^{>0}|)
-|N_{in,r}^{>0}|\\ \nonumber
& =2(|N_a^{>0}|-|N_r^{>0}|)+|N_{in,r}^{>0}|\\ \nonumber
& =2(|N_a^{>0}|-|N_{a,r}^{>0}|)-2|N_{in,r}^{>0}|+|N_{in,r}^{>0}|\\ \nonumber
& \leq 2(|N_a^{>0}|-|N_{a,r}^{>0}|).
\end{align}
Since all roots are spanned by the final forest $F_R$ and no node in $N_{a,d}$ is incident with any edge of $F_R$, we have
$$
N_{a,d}\subseteq N_{a,\neg r}^{=0}.
$$
It follows that
\begin{equation}\label{eq1227-2}
|N_a|-|N_{a,r}|-|N_{a,d}|\geq |N_a^{>0}|+|N_a^{=0}|-|N_{a,r}^{>0}|-|N_{a,r}^{=0}|-|N_{a,\neg r}^{=0}|=|N_a^{>0}|-|N_{a,r}^{>0}|
\end{equation}
Then inequality \eqref{eq109-1} follows from \eqref{eq1228-1} and \eqref{eq1227-2}. This finishes the proof of property \eqref{eq1224-13}, and thus \eqref{eq108-1} holds. Combining \eqref{eq108-1} and \eqref{eq1228-2}, the algorithm is a 2-LMP algorithm.

For the time complexity analysis, let us begin with the Rootless-Growth procedure. In each iteration of the while loop, we either add an edge $e$ to decrease the number of components or deactivate an active component $S$, reducing the number of active components. Initially, there are $n$ components and $n$ active components, so the while loop is executed for $O(n)$ rounds. Computing the edge $e$ and the vertex set $S$ that first become tight takes $O(m)$ and $O(n)$ time, respectively. Thus, the running time for Rootless-Growth is $O(nm)$.

Regarding the K-forest-step and the reverse-deletion step, both of these steps are executed in $O(nm)$ time. Therefore, the total running time of Algorithm 1 is $O(nm)$.
\end{proof}

In the above analysis, a degree-type inequality (i.e. \eqref{eq109-1}) plays an important role. Compared with previous works \cite{Goeman1995} and \cite{Liang-PCMSSC-TCS}, which also involve arguments on degree-type inequalities, the challenge in this paper lies in the estimation of $|N_{a,r}|$. This challenge is caused by ``rootless growth'': for a forest obtained by a rooted-growth method, a component containing a root always remains inactive, which leads to $|N_{a,r}|=0$, while for the forest obtained by a rootless-growth method, $|N_{a,r}|$ might not be zero. A property that guarantees a small value of $|N_{a,r}|$ is \eqref{eq1227-6}. Although it might be natural to break a component with more than one roots by deleting the heaviest edge on the path connecting two roots, such a method cannot guarantee property \eqref{eq1227-6}. While our method of deleting ``the last added edge'' has property \eqref{eq1227-1}, which leads to the validity of Claim 1, and in turn inequality \eqref{eq1227-6}.

\subsection{$\mathcal A_1$: Rootless-Growth+Rootless-Prune}\label{sec3-3}

In the outline ideas presented at the beginning of Section \ref{sec002}, we construct a rooted RPCF$_K$ instance $IR$ from an unrooted URPCF$_K$ instance $I$, ensuring $opt_{I_R}=opt_I$. Theorem \ref{theo1} demonstrates that Algorithm \ref{algo1} is able to compute a 2-LMP solution for $IR$. However, both the $K$-forest-step and the reverse-deletion step rely on prior knowledge of roots, sourced from an optimal solution of instance $I$, resulting in a problematic loop.

In this subsection, we replace these steps, collectively termed the ``rooted pruning scheme'', with a ``rootless pruning scheme''. By combining rootless-growth and rootless-prune methods, we can attain a 2-LMP solution for instance $I$ without assuming any information about roots or the optimal solution. We will show that the rootless pruning scheme is at least as effective as the rooted pruning scheme, ensuring no deterioration in performance. That is,
\begin{equation}\label{eq0703-1}
	w(\mathcal{F}_{UR})+2\pi(V\setminus V(\mathcal{F}_{UR}))\leq w(\mathcal{F}_{R})+2\pi(V\setminus V(\mathcal{F}_{R})),
\end{equation}
where, $\mathcal F_{UR}$ is the set of $K$ subtrees after the rootless pruning scheme and $\mathcal F_R$ is the set of $K$ subtrees after the rooted pruning scheme.

We employ a concept known as ``net worth'' to assist the rootless prune scheme, initially introduced in \cite{Johnson2000} as a bridge to solve the prize-collecting Steiner tree problem. Here, we have adapted it to serve our specific purpose in conducting an LMP analysis for a forest problem.
For a forest $\mathcal{T}$ and a sub-forest $\mathcal T'\subseteq \mathcal T$, define the {\em net worth value of $\mathcal T'$ with respect to $\mathcal T$}, denoted as $nw_\mathcal{T}(\mathcal{T'})$, as
$$nw_\mathcal{T}(\mathcal{T'})=2\pi(V(\mathcal{T}'))-w(\mathcal{T}').$$
and the {\em prize-collecting value} of $\mathcal T'$ with respect to $\mathcal{T}$, denoted as $pc_{\mathcal{T}}(\mathcal{T}')$, as $$pc_{\mathcal{T}}(\mathcal{T}')=w(\mathcal{T}')+2 \pi(V(\mathcal{T})\setminus V(\mathcal{T}')).$$
Note that $pc_{\mathcal{T}}(\mathcal{T}')+nw_{\mathcal{T}}(\mathcal{T}')=2\pi(V(\mathcal{T})))$, which is a constant determined by $\mathcal{T}$. So, finding a subforest $\mathcal{T}'$ of $\mathcal{T}$ with the minimum $pc_{\mathcal{T}}$-values is equivalent to finding such a subforest with the maximum $nw_{\mathcal{T}}$-value.

Recall that $\mathcal C$ represents the set of connected components at the end of the Rootless-Growth procedure applied to instance $I$, while $\mathcal C_R$ represents the set of connected components at the end of the Rootless-Growth procedure applied to instance $IR$. It is important to note that the rootless-growth step is not influenced by the presence of roots, and therefore $\mathcal C = \mathcal C_R$. The only distinction between $\mathcal C$ and $\mathcal C_R$ is that $\mathcal C_R$ includes certain predetermined roots that are assumed to be known, even though they are not explicitly given in the instance. Therefore, our goal \eqref{eq0703-1} is equal to
\begin{equation}\label{eq0104-1}
	nw_{\mathcal{C}}(\mathcal{F}_{UR})\geq nw_{\mathcal{C}_R}(\mathcal{F}_R).
\end{equation}

To implement the ideas outlined above, we need to address two key challenges. The first challenge is, given a tree $T\in\mathcal C$ and an integer $k$, how to prune $T$ into $k$ disjoint subtrees such that the performance ratio 2 is still kept. Call this procedure as {\em rootless-prune realization}, write it as Rootless-Prune($T,k$) and denote its output as $F_{T,k}$. The second challenge is, for each tree $T\in\mathcal{C}$, how to determine the number $k_T$ of connected components that $T$ should be broken into. Call a parameter set $\{k_T:T\in\mathcal{C}\}$ satisfying $\sum_{T\in\mathcal{C}}k_T=K$ as a {\em configuration-scheme}. The {\em best-configuration-scheme} is a configuration scheme $\{k_T^*:T\in\mathcal{C}\}$ with
\begin{equation}\label{eq0613-1}
	\{k_T^*:T\in\mathcal{C}\}=\arg\max_{\{k_T:T\in\mathcal{C}\}}\sum_{T\in\mathcal{C}} nw_T(F_{T,k_T}).
\end{equation}
It is interesting to see that we can make use the algorithm Rootless-Prune$(T^{aux},K)$ on an auxiliary tree $T^{aux}$ to directly find out a forest $\mathcal F_{UR}$ satisfying property \eqref{eq0104-1}, and the best-configuration-scheme is just a by-product.

Section \ref{sec0107-1} gives the procedure Rootless-Prune$(T,k)$. Section \ref{sec0107-2} presents the algorithm to compute a forest $\mathcal F_{UR}$ satisfying property \eqref{eq0104-1}.


\subsubsection{Rootless-Prune($T,k$)}\label{sec0107-1}

Given a tree $T$ and an integer $k$, in this subsection, we show how to prune $T$ into a forest $F_{T,k}$ consisting of $k$ subtrees that has the largest net-worth value.

If $k=0$, then trivially set $F_{T,0}=\emptyset$.
%
%
For $k\geq 1$, define an {\em NW-maximum $k$-forest} of $T$ as a forest $F_{T,k}^*$ composed of $k$ disjoint subtrees of $T$ having the maximum $nw_T(F_{T,k}^*)$. Note that \cite{Johnson2000} described a greedy method to find an NW-maximum subtree (or in our terminology, an NW-maximum 1-forest) of $T$. In the following, we shall use a double dynamic programming method to find an NW-maximum $k$-forest for any $1\leq k\leq n$. In the following, we omit the subscript $T$ of $nw_T$ for simplicity.

\vskip 0.2cm
{\bf A natural method cannot yield an NW-Maximum Forest.} To find an NW-maximum $k$-forest, a logical approach is to start with an NW-maximum 1-forest $F^*_{T,1}$ of tree T, and then devise an iterative algorithm to derive a $k$-forest $F_{T,k}$ from a $(k - 1)$-forest $F_{T,k-1}$, which operates in a greedy manner, aiming to maximize the net worth value increase at each step. There are two natural ways to obtain $F_{T,k}$ from $F_{T,k-1}$:

$(a)$ One way is to add a subtree from $T-F_{T,k-1}$ into $F_{T,k-1}$.

$(b)$ The other way is to break a subtree of $F_{T,k-1}$ into two subtrees.

\begin{figure}[H]
	\begin{center}
		\tikzset{every picture/.style={line width=1pt}} 
		
		\begin{tikzpicture}[x=0.75pt,y=0.75pt,yscale=-1,xscale=1]
			
			\draw    (198.75,207.25) -- (252.25,107.25) ;
			\draw [shift={(252.25,107.25)}, rotate = 298.15] [color={rgb, 255:red, 0; green, 0; blue, 0 }  ][fill={rgb, 255:red, 0; green, 0; blue, 0 }  ][line width=0.75]      (0, 0) circle [x radius= 1.34, y radius= 1.34]   ;
			\draw [shift={(198.75,207.25)}, rotate = 298.15] [color={rgb, 255:red, 0; green, 0; blue, 0 }  ][fill={rgb, 255:red, 0; green, 0; blue, 0 }  ][line width=0.75]      (0, 0) circle [x radius= 1.34, y radius= 1.34]   ;
			\draw    (233.75,206.75) -- (252.25,107.25) ;
			\draw [shift={(233.75,206.75)}, rotate = 280.53] [color={rgb, 255:red, 0; green, 0; blue, 0 }  ][fill={rgb, 255:red, 0; green, 0; blue, 0 }  ][line width=0.75]      (0, 0) circle [x radius= 1.34, y radius= 1.34]   ;
			\draw    (269.75,206.75) -- (252.25,107.25) ;
			\draw [shift={(269.75,206.75)}, rotate = 260.02] [color={rgb, 255:red, 0; green, 0; blue, 0 }  ][fill={rgb, 255:red, 0; green, 0; blue, 0 }  ][line width=0.75]      (0, 0) circle [x radius= 1.34, y radius= 1.34]   ;
			\draw    (305.25,207.75) -- (252.25,107.25) ;
			\draw [shift={(305.25,207.75)}, rotate = 242.19] [color={rgb, 255:red, 0; green, 0; blue, 0 }  ][fill={rgb, 255:red, 0; green, 0; blue, 0 }  ][line width=0.75]      (0, 0) circle [x radius= 1.34, y radius= 1.34]   ;
			
			\draw (257.5,96.13) node {$2$};
			\draw (265.5,215.63) node{$7.5$};
			\draw (230,215.63) node{$6.5$};
			\draw (198,216.63) node{$6.2$};
			\draw (210,165.13) node{$9$};
			\draw (293.5,165.13) node{$11$};
			\draw (271.5,166.63) node{$9$};
			\draw (236.5,166.13) node{$9$};
			\draw (303,216.13) node{$5.5$};
			\draw (251.5,250) node{\scriptsize(a)\ The tree $T$ and suppose $k=3$};

		\end{tikzpicture}
		\hskip 0.8cm
		\begin{tikzpicture}[x=0.75pt,y=0.75pt,yscale=-1,xscale=1]
			
			\draw [line width=2]    (198.75,207.25) -- (252.25,107.25) ;
			\draw [shift={(252.25,107.25)}, rotate = 298.15] [color={rgb, 255:red, 0; green, 0; blue, 0 }  ][fill={rgb, 255:red, 0; green, 0; blue, 0 }  ][line width=2]      (0, 0) circle [x radius= 2.61, y radius= 2.61]   ;
			\draw [shift={(198.75,207.25)}, rotate = 298.15] [color={rgb, 255:red, 0; green, 0; blue, 0 }  ][fill={rgb, 255:red, 0; green, 0; blue, 0 }  ][line width=2]      (0, 0) circle [x radius= 2.61, y radius= 2.61]   ;
			\draw [line width=2]    (233.75,206.75) -- (252.25,107.25) ;
			\draw [shift={(233.75,206.75)}, rotate = 280.53] [color={rgb, 255:red, 0; green, 0; blue, 0 }  ][fill={rgb, 255:red, 0; green, 0; blue, 0 }  ][line width=2]      (0, 0) circle [x radius= 2.61, y radius= 2.61]   ;
			\draw [line width=2]    (269.75,206.75) -- (252.25,107.25) ;
			\draw [shift={(269.75,206.75)}, rotate = 260.02] [color={rgb, 255:red, 0; green, 0; blue, 0 }  ][fill={rgb, 255:red, 0; green, 0; blue, 0 }  ][line width=2]      (0, 0) circle [x radius= 2.61, y radius= 2.61]   ;
			\draw [line width=2]    (305.25,207.75) -- (252.25,107.25) ;
			\draw [shift={(305.25,207.75)}, rotate = 242.19] [color={rgb, 255:red, 0; green, 0; blue, 0 }  ][fill={rgb, 255:red, 0; green, 0; blue, 0 }  ][line width=2]      (0, 0) circle [x radius= 2.61, y radius= 2.61]   ;
			\draw (257.5,96.13) node {$2$};
			\draw (265.5,218) node{$7.5$};
			\draw (230,218) node{$6.5$};
			\draw (198,218) node{$6.2$};
			\draw (210,165.13) node{$9$};
			\draw (293.5,165.13) node{$11$};
			\draw (271.5,166.63) node{$9$};
			\draw (236.5,166.13) node{$9$};
			\draw (303,218) node{$5.5$};
			\draw (251.5,250) node{\scriptsize (b)\ $nw(F^*_{T,1})=nw(T)=17.4$};
			
		\end{tikzpicture}
		\hskip 0.8cm
		\begin{tikzpicture}[x=0.75pt,y=0.75pt,yscale=-1,xscale=1]
			
			\draw [line width=2]    (198.75,207.25) -- (252.25,107.25) ;
			\draw [shift={(252.25,107.25)}, rotate = 298.15] [color={rgb, 255:red, 0; green, 0; blue, 0 }  ][fill={rgb, 255:red, 0; green, 0; blue, 0 }  ][line width=2]      (0, 0) circle [x radius= 2.61, y radius= 2.61]   ;
			\draw [shift={(198.75,207.25)}, rotate = 298.15] [color={rgb, 255:red, 0; green, 0; blue, 0 }  ][fill={rgb, 255:red, 0; green, 0; blue, 0 }  ][line width=2]      (0, 0) circle [x radius= 2.61, y radius= 2.61]   ;
			\draw [line width=2]    (233.75,206.75) -- (252.25,107.25) ;
			\draw [shift={(233.75,206.75)}, rotate = 280.53] [color={rgb, 255:red, 0; green, 0; blue, 0 }  ][fill={rgb, 255:red, 0; green, 0; blue, 0 }  ][line width=2]      (0, 0) circle [x radius= 2.61, y radius= 2.61]   ;
			\draw [line width=2]    (269.75,206.75) -- (252.25,107.25) ;
			\draw [shift={(269.75,206.75)}, rotate = 260.02] [color={rgb, 255:red, 0; green, 0; blue, 0 }  ][fill={rgb, 255:red, 0; green, 0; blue, 0 }  ][line width=2]      (0, 0) circle [x radius= 2.61, y radius= 2.61]   ;
			\draw [line width=0.75,dashed]    (305.25,207.75) -- (252.25,107.25) ;
			\draw [shift={(305.25,207.75)}, rotate = 242.19] [color={rgb, 255:red, 0; green, 0; blue, 0 }  ][fill={rgb, 255:red, 0; green, 0; blue, 0 }  ][line width=2]      (0, 0) circle [x radius= 3.02, y radius= 3.02]   ;
			
			\draw (257.5,96.13) node {$2$};
			\draw (265.5,218) node{$7.5$};
			\draw (230,218) node{$6.5$};
			\draw (198,218) node{$6.2$};
			\draw (210,165.13) node{$9$};
			\draw (293.5,165.13) node{$11$};
			\draw (271.5,166.63) node{$9$};
			\draw (236.5,166.13) node{$9$};
			\draw (303,218) node{$5.5$};
			\draw (251.5,250) node{\scriptsize(c) \ $nw(F_{T,2})=nw(F^*_{T,1})+11=28.4$};
		\end{tikzpicture}
		\vskip 0.5cm
		\begin{tikzpicture}[x=0.75pt,y=0.75pt,yscale=-1,xscale=1]
			
			\draw [line width=0.75,dashed]    (198.75,207.25) -- (252.25,107.25) ;
			\draw [shift={(252.25,107.25)}, rotate = 298.15] [color={rgb, 255:red, 0; green, 0; blue, 0 }  ][fill={rgb, 255:red, 0; green, 0; blue, 0 }  ][line width=0.75]      (0, 0) circle [x radius= 1.34, y radius= 1.34]   ;
			\draw [shift={(198.75,207.25)}, rotate = 298.15] [color={rgb, 255:red, 0; green, 0; blue, 0 }  ][fill={rgb, 255:red, 0; green, 0; blue, 0 }  ][line width=0.75]      (0, 0) circle [x radius= 1.34, y radius= 1.34]   ;
			\draw [line width=0.75,dashed]    (233.75,206.75) -- (252.25,107.25) ;
			\draw [shift={(233.75,206.75)}, rotate = 280.53] [color={rgb, 255:red, 0; green, 0; blue, 0 }  ][fill={rgb, 255:red, 0; green, 0; blue, 0 }  ][line width=2]      (0, 0) circle [x radius= 2.61, y radius= 2.61]   ;
			\draw [line width=0.75,dashed]    (269.75,206.75) -- (252.25,107.25) ;
			\draw [shift={(269.75,206.75)}, rotate = 260.02] [color={rgb, 255:red, 0; green, 0; blue, 0 }  ][fill={rgb, 255:red, 0; green, 0; blue, 0 }  ][line width=2]      (0, 0) circle [x radius= 2.61, y radius= 2.61]   ;
			\draw [shift={(305.25,207.75)}, rotate = 242.19] [color={rgb, 255:red, 0; green, 0; blue, 0 }  ][fill={rgb, 255:red, 0; green, 0; blue, 0 }  ][line width=2]      (0, 0) circle [x radius= 3.02, y radius= 3.02]   ;
			
			\draw (257.5,96.13) node {$2$};
			\draw (265.5,218) node{$7.5$};
			\draw (230,218) node{$6.5$};
			\draw (198,218) node{$6.2$};
			\draw (210,165.13) node{$9$};
			\draw (271.5,166.63) node{$9$};
			\draw (236.5,166.13) node{$9$};
			\draw (303,218) node{$5.5$};
			\draw (251.5,250) node{\scriptsize(d)\ $nw(F_{T,3})=nw(F_{T,2})+10.6=39$};
		\end{tikzpicture}
		\hskip 0.5cm
		\begin{tikzpicture}[x=0.75pt,y=0.75pt,yscale=-1,xscale=1]
			
			\draw [line width=0.75]    (198.75,207.25) -- (252.25,107.25) ;
			\draw [shift={(252.25,107.25)}, rotate = 298.15] [color={rgb, 255:red, 0; green, 0; blue, 0 }  ][fill={rgb, 255:red, 0; green, 0; blue, 0 }  ][line width=0.75]      (0, 0) circle [x radius= 1.34, y radius= 1.34]   ;
			\draw [shift={(198.75,207.25)}, rotate = 298.15] [color={rgb, 255:red, 0; green, 0; blue, 0 }  ][fill={rgb, 255:red, 0; green, 0; blue, 0 }  ][line width=2]      (0, 0) circle [x radius= 2.61, y radius= 2.61]   ;
			\draw [line width=0.75]    (233.75,206.75) -- (252.25,107.25) ;
			\draw [shift={(233.75,206.75)}, rotate = 280.53] [color={rgb, 255:red, 0; green, 0; blue, 0 }  ][fill={rgb, 255:red, 0; green, 0; blue, 0 }  ][line width=2]      (0, 0) circle [x radius= 2.61, y radius= 2.61]   ;
			\draw [line width=0.75]    (269.75,206.75) -- (252.25,107.25) ;
			\draw [shift={(269.75,206.75)}, rotate = 260.02] [color={rgb, 255:red, 0; green, 0; blue, 0 }  ][fill={rgb, 255:red, 0; green, 0; blue, 0 }  ][line width=2]      (0, 0) circle [x radius= 2.61, y radius= 2.61]   ;
			\draw [line width=0.75]    (305.25,207.75) -- (252.25,107.25) ;
			\draw [shift={(305.25,207.75)}, rotate = 242.19] [color={rgb, 255:red, 0; green, 0; blue, 0 }  ][fill={rgb, 255:red, 0; green, 0; blue, 0 }  ][line width=0.75]      (0, 0) circle [x radius= 1.34, y radius= 1.34]   ;
			
			\draw (257.5,96.13) node {$2$};
			\draw (265.5,218) node{$7.5$};
			\draw (230,218) node{$6.5$};
			\draw (198,218) node{$6.2$};
			\draw (210,165.13) node{$9$};
			\draw (293.5,165.13) node{$11$};
			\draw (271.5,166.63) node{$9$};
			\draw (236.5,166.13) node{$9$};
			\draw (303,218) node{$5.5$};
			\draw (251.5,250) node{\scriptsize(e) \ $nw(F^*_{T,3})=12.4+13+15= 40.4$};
		\end{tikzpicture}
	\end{center}
	\caption{An example showing that the greedy idea cannot yield optimal $k$-forest.}\label{fig2}
\end{figure}
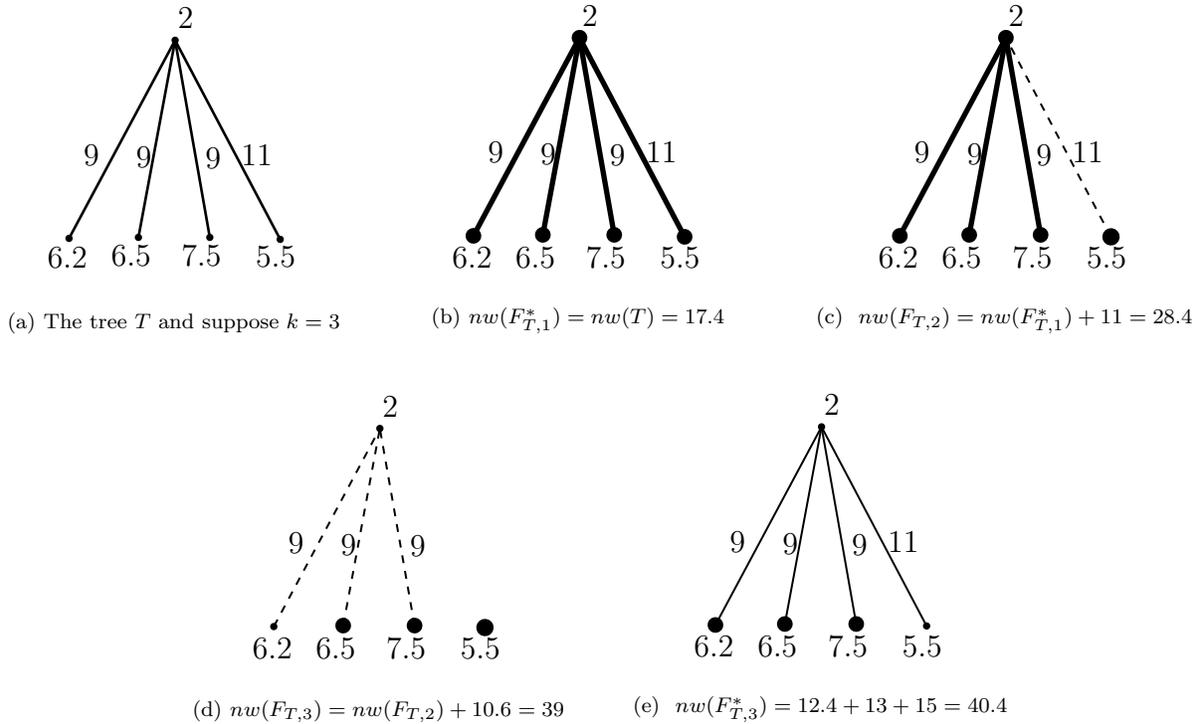

However, this method cannot compute an optimal solution, as indicated by the example in Fig.~\ref{fig2}. In this example, the vertex penalties and the edge weights are marked by the numbers beside them, and for convenience, these numbers are also used to refer to the corresponding vertices and edges. The goal is to construct a 3-forest. In Fig.~\ref{fig2} (b), the blackened lines and blackened nodes indicate the NW-maximum 1-forest $F_{T,1}^*$ of $T$ computed in the first iteration, achieving net wroth value 17.4. Fig.~\ref{fig2} (c) depicts the result of the second iteration: deleting the dashed edge from $F_{T,1}^*$ results in a 2-forest $F_{T,2}$ with the biggest increase of net worth value, such that $nw(F_{T,2})=17+11=28$. Fig.~\ref{fig2} (d) depicts the result of the last iteration: deleting from $F_{T,2}$ the three dashed edges and the two vertices 2 and 6 results in a 3-forest $F_{T,3}$ with the biggest increase of net worth value, such that $nw(F_{T,3})=39$. However, the NW-maximum 3-forest $F^*_{T,3}$, indicated by Fig.~\ref{fig2} (e), is composed of three isolated vertices $\{6.2,6.5,7.5\}$ with $nw(F^*_{T,3})=40.4$.

In the following, we design a {\em double dynamic programming} method to obtain  an NW-maximum $k$-forest. The reason why we call it double dynamic is because we shall embed two (sub-)dynamic algorithms in a main dynamic programming algorithm.

\vskip 0.2cm
{\bf A double dynamic programming.} For simplicity of statement, we choose an arbitrary vertex $r\in V(T)$ as the root and denote the rooted tree as $T_r$. For a vertex $u\in V(T_r)$, let
$CH(u)=\{v\in V(T_r):v \mbox{ is a child } u\}$ be the set of children of $u$.
Define the status variables as follows. Let $F_u=F(u,p_u,t_u,k_u)$ be a maximum net-worth sub-forest of $T_u$ satisfying the following conditions: $p_u=1$ indicates that $u$ belongs to this sub-forest, otherwise $p_u=0$; $t_u=1$ indicates that $u$ is a root, otherwise $t_u=0$; there are exactly $k_u$ roots in this sub-forest ($t_u$ and $k_u$ are combined to help counting the number of components of the final forest). Note that if $p_u=0$, then $t_u=0$. If $F'$ is a forest satisfying the above conditions but may not have the maximum net worth value, then we say that $F'$ is a {\em feasible forest for tuple $(u,p_u,t_u,k_u)$.} Let $f(u,p_u,t_u,k_u)$ be the net-worth of $F(u,p_u,t_u,k_u)$. The transition formula for the computation of $f(u,p_u,t_u,k_u)$ is presented in the following lemma.

\begin{lemma}\label{eq0406-1}
	{\rm The value $f(u,p_u,t_u,k_u)$ can be computed by the following formula:
		\begin{equation*}
			f(u,p_u,t_u,k_u)=\left\{\begin{array}{ll}
				\max\limits_{\{k_v\}_{v\in CH(u)}:\atop\sum_{v\in CH(u)} k_v=k_u}\left\{\sum\limits_{v\in CH(u)\atop {\mbox{\tiny if $p_v=1$ then $t_v=1$}}}f(v,p_v,t_v,k_v)\right\},& \mbox{if $p_u=0$},\\
				\\
				\max\limits_{S(u)\subseteq CH(u), \atop {\{k_v\}_{v\in CH(u)}:\atop \sum_{v\in CH(u)} k_v=k_u-t_u}}\left\{\begin{array}{ll}
					\sum\limits_{v\in S(u)}(f(v,1,0,k_v)-w(uv)) \\
					+ \sum\limits_{v\in CH(u)-S(u)\atop {\mbox{\tiny if $p_v=1$ then $t_v=1$}}} f(v,p_v,t_v,k_v)\\
					+2\pi(u)
				\end{array}\right\},& \mbox{if $p_u=1$}.
			\end{array}\right.
		\end{equation*}
		The initial conditions are as follows.
		
		$(\romannumeral1)$ For any vertex $u$ of $T_r$, $f(u,0,1,k_u)=-\infty$;
		
		$(\romannumeral2)$ For a leaf $u$ of $T_r$, $f(u,1,1,1)=f(u,1,0,0)=2\pi(u)$, $f(u,0,0,0)=0$, and for the other 4-tuples, $f(u,p_u,t_u,k_u)=-\infty$.}
	%
\end{lemma}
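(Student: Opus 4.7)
The argument is the standard dynamic-programming correctness proof: decompose an arbitrary feasible sub-forest for the tuple $(u,p_u,t_u,k_u)$ into its restrictions to the subtrees $T_v$ rooted at children $v\in CH(u)$, and conversely glue any compatible family of child sub-forests back into a feasible sub-forest for $u$. Throughout the analysis I read the parameters as follows: $k_u$ counts the \emph{closed} components of the sub-forest on $T_u$, a component being declared closed at its top-most vertex $v$ exactly when $t_v=1$; thus $p_u=1,\,t_u=0$ means that the component containing $u$ is still open and is committed to being merged with whatever attaches above $u$ in the full forest. The initial conditions are immediate: for a leaf $u$ the only sub-forests of $T_u$ are $\emptyset$ (giving $f(u,0,0,0)=0$) and $\{u\}$, and the latter is either an open singleton ($t_u=0,\,k_u=0$) or a closed singleton ($t_u=1,\,k_u=1$), both of net worth $2\pi(u)$; condition $(\mathrm{i})$ holds at every vertex since $t_u=1$ forces $u\in V(F)$, contradicting $p_u=0$.

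Next I would verify the recurrence in the ``$\leq$'' direction. Let $F$ be feasible for $(u,p_u,t_u,k_u)$ and set $F_v=F\cap T_v$. If $p_u=0$, then no edge $uv$ can lie in $F$, hence $F=\bigsqcup_v F_v$; whenever $v\in V(F)$ the top component of $F_v$ cannot extend upward through the missing $u$, forcing $t_v=1$, which is exactly the side-condition in the first branch of the formula, and the closed counts add as $k_u=\sum_v k_v$. If $p_u=1$, let $S(u)=\{v\in CH(u):uv\in F\}$; for $v\in S(u)$, the component of $v$ in $F_v$ is open and is absorbed into $u$'s component, so $(p_v,t_v)=(1,0)$ and the contribution to net worth is $nw(F_v)-w(uv)$; for $v\notin S(u)$, either $p_v=0$ or $v$ is the top of its own closed component, forcing $t_v=1$; the vertex $u$ itself contributes $2\pi(u)$; and the closed-component count splits as $k_u=t_u+\sum_v k_v$, i.e.\ $\sum_v k_v=k_u-t_u$. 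Bounding each $nw(F_v)$ by $f(v,p_v,t_v,k_v)$ then shows that $nw(F)$ is at most one of the terms inside the max on the RHS.

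For the ``$\geq$'' direction, take an argmax choice of $S(u)$, $\{(p_v,t_v)\}_v$ and $\{k_v\}_v$, replace each entry by an optimal $F(v,p_v,t_v,k_v)$, and glue: add $u$ when $p_u=1$ together with the edges $\{uv:v\in S(u)\}$. The result is a sub-forest of $T_u$, is feasible for $(u,p_u,t_u,k_u)$, and achieves the RHS exactly. The main obstacle I foresee is the open/closed-component bookkeeping when $p_u=1$: one has to check carefully that the merged component at $u$ is closed exactly when $t_u=1$ (so that the total closed count is $t_u+\sum_v k_v$), and that the open components coming from children in $S(u)$ are absorbed into $u$'s component without being double-counted inside $k_v$---this is precisely what forces the constraint $\sum_v k_v=k_u-t_u$ rather than $\sum_v k_v=k_u$. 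Once this accounting is settled, the rest of the proof is a mechanical case check on the two branches of the formula.
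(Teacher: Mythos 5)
Your proposal is correct and follows essentially the same route as the paper, which justifies the recurrence by exactly this decomposition of a feasible sub-forest into its restrictions $F\cap T_v$ over children (Case $p_u=0$ and Case $p_u=1$ with the set $S(u)$ of children joined to $u$), together with the gluing/optimal-substructure converse. Your ``open/closed component'' reading of $(p_u,t_u,k_u)$ is just an explicit restatement of the paper's root-counting semantics, and your accounting $\sum_v k_v=k_u-t_u$ matches the paper's argument.
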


The ideas behind the transition formula are stated as follows.

\begin{itemize}
	\item {\bf Case 1}: $p_u=0$ \\
	Note that $p_u=0$ implies $t_u=0$. In this case, for every $v\in CH(u)$, $F'_v=F(u,p_u,t_u,k_u)\cap T_v$ is a feasible forest for some tuple $(v,p_v,t_v,k_v)$. If $v\in F'_v$ (i.e. $p_v=1$), then $v$ must be a root (i.e. $t_v=1$) because $u\not\in F(u,p_u,t_u,k_u)$. The forests $\{F'_v\}_{v\in CH(u)}$ are independent with each other (because $p_u=0$) and collectively contains $k_u$ roots, that is $\sum_{v\in CH(u)} k_v=k_u$. In order that $F(u,p_u,t_u,k_u)$ has the maximum net worth value, $F'_v$ must have the maximum net worth value among all feasible forests for tuple $(v,p_v,t_v,k_v)$, that is, $F'_v=F(v,p_v,t_v,k_v)$. Then $F(u,p_u,t_u,k_u)$ is the best one among the union of $\{F(v,p_v,t_v,k_v)\}_{v\in CH(u)}$ conforming to these conditions.
	\item {\bf Case 2}: $p_u=1$\\
	Note that $p_u=1$ means $u\in F(u,p_u,t_u,k_u)$. Similar to Case 1, $F_v'$ is used to denote $F'_v=F(u,p_u,t_u,k_u)\cap T_v$. If $u$ is a root (i.e., $t_u=1$), then forest $\{F'_v:v\in CH(u)\}$ collectively contains $k_u-1$ roots, otherwise $\{F'_v:v\in CH(u)\}$ collectively contains $k_u$ roots. Compared with Case 1, besides the allocation $\{k_v\}_{v\in CH(u)}$, we have to determine the set $S(u)$ that consists of those children of $u$ linked to $u$ in $F(u,p_u,t_u,k_u)$. For $v\in S(u)$, we have $p_v=1$ and $t_v=0$. The net worth value of such a forest can be computed by the formula in the second case of the transition formula, with $\pi(u)$ and $\{w(uv)\}_{v\in S(u)}$ being taken into consideration.
\end{itemize}

Note that in the above transition formula, we have to consider the allocation of $\{k_v\}_{v\in CH(u)}$ satisfying $\sum_{v\in CH(u)} k_v=k_u$, which may take exponential time if a direct enumeration method is used. In the following, we show that there is a cleverer method to realize this task by embedding two (sub-)dynamic programming algorithms in the above (main-)dynamic programming formula.

For convenience, we say that $p_v$ and $t_v$ are compatible if $p_v=1$ implies $t_v=1$ in the first case and $p_v=1$ implies $t_v=0$ for $v\in S(u)$ in the second case. Suppose $u$ has $d_u$ children $\{v_1,\ldots,v_{d_u}\}$. Denote by $CH(u,i)=\{v_1,\ldots,v_i\}$.

\vskip 0.2cm
{\bf Embedded DP for Case 1.} For $i=1,\ldots,d_u$ and an integer $K_i\in\{0,1,\ldots,k_u\}$, let $\widetilde{F}(i,K_i)$ be a maximum net-worth sub-forest of $\bigcup_{v\in CH(u,i)}T_v$ satisfying the following conditions: $\widetilde F(i,K_i)=\bigcup_{v\in CH(u,i)}F(v,p_v,t_v,k_v)$ for some compatible pairs $\{(p_v,t_v)\}_{v\in CH(u,i)}$ and some integers $\{k_v\}_{v\in CH(u,i)}$ with $\sum_{v\in CH(u,i)}k_v=K_i$ and each $k_v\in \{0,1,\ldots,K_i\}$. Let $\widetilde f(i,K_i)=nw(\widetilde F(i,K_i))$.
The following lemma shows how to compute $\widetilde f(i,K_i)$.
%
%
%

\begin{lemma}\label{lem0215-1}
	{\rm
		The value $\widetilde f(i,K_i)$ can be computed by the following transition formula:
		\begin{align*}
			\widetilde f(i,K_i)=\max_{k_{v_i}\in\{0,1,\ldots,n_i\}\atop {p_{v_i},t_{v_{i}}\in\{0,1\}\atop{t_{v_i}, p_{v_i} \mbox{\tiny are compatible}}}} \left\{\widetilde f(i-1,K_i-k_{v_i})+f(v_i,p_{v_i},t_{v_i},k_{v_i})\right\},
		\end{align*}
		where $n_i=\min\{K_i,|V(T_{v_i})|\}$. The initial conditions are: if $0\leq K_1\leq n_1$ then $\widetilde f(1,K_1)=\max \{f(v_1,p_{v_1},t_{v_1},K_1)\}$, where the maximum is taken over all pairs $(p_v,t_v)$ that are compatible; if $K_1>n_1$ then $\widetilde f(1,K_1)=-\infty$.}
\end{lemma}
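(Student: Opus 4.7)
The proof will follow the standard template for verifying a DP recurrence: show that both sides of the transition formula compute the same quantity by establishing a bijection between the feasible structures being optimized over. The key observation is that the subtrees $T_{v_1}, \ldots, T_{v_{d_u}}$ rooted at distinct children of $u$ are pairwise vertex-disjoint, so any sub-forest of $\bigcup_{v \in CH(u,i)} T_v$ decomposes uniquely as a disjoint union of sub-forests, one for each $T_{v_j}$. Since $nw_T$ is additive over vertex-disjoint sub-forests (both the penalty sum and the edge-weight sum split additively), the net-worth of a combined forest is the sum of the net-worths of its pieces.

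The plan is to prove the recurrence by showing inequality in both directions. For the $\geq$ direction (RHS attainable), fix any choice of $k_{v_i} \in \{0,\ldots,n_i\}$ and any compatible pair $(p_{v_i}, t_{v_i})$; take the optimal forests $\widetilde F(i-1, K_i - k_{v_i})$ and $F(v_i, p_{v_i}, t_{v_i}, k_{v_i})$ guaranteed by the definitions. Their union is a sub-forest of $\bigcup_{v \in CH(u,i)} T_v$ that decomposes as $\bigcup_{v \in CH(u,i-1)} F(v, p_v, t_v, k_v) \cup F(v_i, p_{v_i}, t_{v_i}, k_{v_i})$ with compatible pairs throughout and with total root count $(K_i - k_{v_i}) + k_{v_i} = K_i$, hence is feasible for $(i, K_i)$. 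By additivity of $nw$, this yields a feasible value of exactly the RHS summand, so $\widetilde f(i, K_i)$ is at least the max over all choices. For the $\leq$ direction, take $\widetilde F(i, K_i)$ with decomposition $\bigcup_{v \in CH(u,i)} F(v, p_v, t_v, k_v)$; let $k_{v_i}^*, p_{v_i}^*, t_{v_i}^*$ be the values realized at $v_i$. Then $\bigcup_{v \in CH(u,i-1)} F(v, p_v, t_v, k_v)$ is a feasible forest for $(i-1, K_i - k_{v_i}^*)$, so its net-worth is at most $\widetilde f(i-1, K_i - k_{v_i}^*)$; similarly the piece in $T_{v_i}$ has net-worth at most $f(v_i, p_{v_i}^*, t_{v_i}^*, k_{v_i}^*)$. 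Additivity then gives $\widetilde f(i, K_i) \leq \widetilde f(i-1, K_i - k_{v_i}^*) + f(v_i, p_{v_i}^*, t_{v_i}^*, k_{v_i}^*)$, which is at most the RHS maximum.

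For the initial conditions at $i=1$, the set $CH(u,1) = \{v_1\}$, so a feasible forest for $(1, K_1)$ is exactly $F(v_1, p_{v_1}, t_{v_1}, K_1)$ for some compatible pair, whose net-worth is $f(v_1, p_{v_1}, t_{v_1}, K_1)$; taking the maximum over compatible pairs yields the stated formula. If $K_1 > n_1 = \min\{K_1, |V(T_{v_1})|\} = |V(T_{v_1})|$, no feasible forest exists since $T_{v_1}$ contains fewer than $K_1$ vertices and hence cannot contain $K_1$ roots, so $\widetilde f(1, K_1) = -\infty$ by convention.

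The only subtle point, which I expect to require careful handling, is the range restriction $k_{v_i} \in \{0, 1, \ldots, n_i\}$: we must justify both that values outside this range are infeasible (hence correctly pruned) and that no feasible combination is excluded. Infeasibility of $k_{v_i} > |V(T_{v_i})|$ holds because a forest in $T_{v_i}$ has at most $|V(T_{v_i})|$ connected components; infeasibility of $k_{v_i} > K_i$ is immediate since $\widetilde F(i-1, K_i - k_{v_i})$ would need a negative root count. Thus in the recurrence we may restrict to $0 \leq k_{v_i} \leq n_i$ without loss, and the remaining budget $K_i - k_{v_i} \geq 0$ is handled recursively. Combined with the additivity observation and the compatibility bookkeeping (which is inherited unchanged from the definition of $f$), this establishes the lemma.
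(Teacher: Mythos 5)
Your proof is correct and is essentially the argument the paper leaves implicit: the paper states Lemma~\ref{lem0215-1} without a formal proof, and your verification (vertex-disjointness of the subtrees $T_{v_j}$, additivity of $nw$ over disjoint pieces, the two-sided exchange between the prefix over $CH(u,i-1)$ and the piece in $T_{v_i}$, plus the range/feasibility check for $k_{v_i}\le n_i$ and the $i=1$ base case) is exactly the routine justification intended by the definitions of $f$ and $\widetilde f$.
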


\begin{coro}\label{coro0612-3}
	{\rm The value $f(u,p_u,t_u,k_u)$ in Lemma \ref{eq0406-1} for $p_u=0$ can be computed using Lemma \ref{lem0215-1} in time $O(d_u(k_u+1)^2)$.}
\end{coro}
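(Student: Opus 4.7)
The plan is to observe that the quantity in the first case of Lemma \ref{eq0406-1}, namely $f(u,0,0,k_u)$, coincides exactly with $\widetilde f(d_u,k_u)$ from Lemma \ref{lem0215-1}. Indeed, when $p_u=0$ (and hence $t_u=0$), a maximum net-worth forest $F(u,0,0,k_u)$ contains no edge incident with $u$, so it decomposes as a disjoint union over $v\in CH(u)$ of a feasible forest for some tuple $(v,p_v,t_v,k_v)$ with $p_v,t_v$ compatible (i.e.\ $p_v=1\Rightarrow t_v=1$) and $\sum_{v\in CH(u)}k_v=k_u$. This is precisely the family over which $\widetilde f(d_u,k_u)$ optimizes, so $f(u,0,0,k_u)=\widetilde f(d_u,k_u)$.

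Given this identification, I would evaluate $\widetilde f$ bottom-up. Fill the $i=1$ row in $O(k_u+1)$ time by using the initial conditions of Lemma \ref{lem0215-1}, taking the max over the constantly many compatible pairs $(p_{v_1},t_{v_1})$ for each $K_1\in\{0,1,\ldots,k_u\}$. Then for $i=2,\ldots,d_u$ and each $K_i\in\{0,1,\ldots,k_u\}$, apply the transition formula of Lemma \ref{lem0215-1}: scan $k_{v_i}\in\{0,1,\ldots,\min\{K_i,|V(T_{v_i})|\}\}$ and the constantly many compatible choices of $(p_{v_i},t_{v_i})$, taking the maximum of $\widetilde f(i-1,K_i-k_{v_i})+f(v_i,p_{v_i},t_{v_i},k_{v_i})$. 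The values $f(v_i,\cdot,\cdot,\cdot)$ are available because the main DP processes children before their parent.

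The running time follows by bookkeeping. The table has $d_u\cdot(k_u+1)$ cells; each cell requires at most $k_u+1$ values of $k_{v_i}$ times a constant number of compatible $(p,t)$ pairs, so each cell costs $O(k_u+1)$. Multiplying gives $O(d_u(k_u+1)^2)$ as claimed. I do not expect a genuine obstacle here; the only subtlety is ensuring the compatibility condition ``$p_v=1\Rightarrow t_v=1$'' is enforced in the inner max (so $(p_v,t_v)=(1,0)$ is excluded from the first case), which adds only a constant factor and keeps the overall bound intact.
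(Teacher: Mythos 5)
Your proposal is correct and follows the same route as the paper: identify $f(u,0,0,k_u)$ with $\widetilde f(d_u,k_u)$, fill the $d_u(k_u+1)$-cell table via the transition formula of Lemma \ref{lem0215-1}, and bound each cell's cost by $O(k_u+1)$ choices of $k_{v_i}$ times a constant number of compatible $(p_{v_i},t_{v_i})$ pairs. No gaps; your extra justification of the identification and of the compatibility bookkeeping only makes explicit what the paper leaves implicit.
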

\begin{proof}
	Note that if $p_u=0$ implies $t_u=0$ and by the definitions of $f$ and $\widetilde f$, we have
	$$
	f(u,0,0,k_u)=\widetilde f(d_u,k_u).
	$$
	So, to compute the $f$-value for $p_u=0$, we may use Lemma \ref{lem0215-1} to compute $\{\widetilde f(i,K_i)\colon i=1,\ldots,d_u;K_i=0,1,\ldots,k_u\}$ and output the last value $\widetilde f(d_u,k_u)$. The time complexity follows from the observations that there are $d_u(k_u+1)$ values of the form $\widetilde f(i,K_i)$ to be computed and computing each $\widetilde f(i,K_i)$ takes times $2(n_i+1)\leq 2(k_u+1)$.
\end{proof}

{\bf Embedded DP for Case 2.} Compared with Case 1, we have to additionally determine set $S(u)$.  For this purpose, for $i=1,\ldots,d$, we add a variable $q_i$ to indicate whether $v_i$ is linked to $u$ or not, and add variable $s_i$ to record how many children in $CH(u,i)$ have to be linked to $u$. Let $T_u^{(i)}=u+\{uv\}_{v\in CH(u,i)}+\{T_v\}_{v\in CH(u,i)}$ be the subtree of $T_u$ restricted to its first $i$ children. For $i\in\{1,\ldots,d\}$, $q_i\in\{0,1\}$, $K_i\in\{t_u,t_u+1,\ldots,k_u\}$ and $s_i\in\{0,1,\ldots,i\}$, define $\widetilde{F}(t_u,i,q_i,K_i,s_i)$ to be a maximum net-worth sub-forest of $T_u^{(i)}$ with the following structure: this forest contains $K_i$ roots; if $q_i=1$ then edge $uv_i$ belongs to this forest, otherwise $uv_i$ does not belong to this forest; there are exactly $s_i$ children in $CH(u,i)$ that are linked to $u$, that is, there exists a set $S(u,i)\subseteq CH(u,i)$ with $|S(u,i)|=s_i$ such that
$$
\widetilde{F}(t_u,i,q_i,K_i,s_i)=u+\{uv\}_{v\in S(u,i)}+\bigcup_{v\in S(u,i)}F(v,1,0,k_v)+\bigcup_{CH(u,i)-S(u,i)}F(v,p_v,t_v,k_v)
$$
for some compatible pairs $\{(p_v,t_v)\}_{v\in CH(u,i)-S(u,i)}$ and some integers $\{k_v\}_{v\in CH(u,i)}$ satisfying $\sum_{v\in CH(u,i)}k_v=K_i-t_u$ with $k_v\in \{0,1,\ldots,K_i-t_u\}$. Let $\widetilde{f}(t_u,i,q_i,K_i,s_i)=nw(\widetilde{F}(t_u,i,q_i,K_i,s_i))$. The following lemma shows how to compute $\widetilde f(t_u,i,q_i,K_i,s_i)$.

\begin{lemma}\label{lem0412-1}
	{\rm Denote $n_i=\min\{K_i,|V(T_{v_i})|\}$. Let
		$$
		\hat{f}(i,q_i,k_{v_i})=\left\{\begin{array}{ll} f(v_i,1,0,k_{v_i})-w(uv_i), & q_i=1,\\ \max\{f(v_i,1,1,k_{v_i}),f(v_i,0,0,k_{v_i})\}, & q_i=0.\end{array}\right.
		$$
		The value $\widetilde{f}(t_u,i,q_i,K_i,s_i)$ can be computed by the transition formula:	
		
		\begin{align*}
			\widetilde{f}(t_u,i,q_i,K_i,s_i)=\max_{k_{v_i}\in\{0,1,\ldots,n_i\}
				\atop{q_{i}\in\{0,1\}}}\left\{\widetilde{f}\left(t_u,i-1,q_{i-1},K_i-k_{v_i},s_i-q_i\right)+ \hat{f}(i,q_i,k_{v_i})\right\},
		\end{align*}
		where $n_i=\min\{K_i,|V(T_{v_i})|\}$. The initial conditions are: if $0\leq K_1\leq n_1$, then $\widetilde f(t_u,1,0,K_1,0)=\hat f(1,0,K_1-t_u)$, $\widetilde f(t_u,1,1,K_1,1)=\hat f(1,1,K_1-t_u)$ and all the other $\widetilde f(t_u,1,q_1,K_1,s_1)=-\infty$.}
\end{lemma}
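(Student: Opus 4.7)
The plan is to prove Lemma~\ref{lem0412-1} by a standard dynamic-programming decomposition: I will show that any feasible forest $\widetilde{F}(t_u,i,q_i,K_i,s_i)$ splits uniquely into an independent ``new child'' piece confined to $T_{v_i}\cup\{uv_i\}$ and a ``residual'' piece confined to $T_u^{(i-1)}$, and that the two pieces are individually optimal in their respective subproblems. Once this decomposition is pinned down, the recurrence follows by summing net-worth contributions and maximizing over the single-child bookkeeping parameter $k_{v_i}$ (and, implicitly, the choice of $q_{i-1}$ inside the residual call).

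For the ``$\leq$'' direction, let $F^*=\widetilde{F}(t_u,i,q_i,K_i,s_i)$ be an optimal witness and let $k_{v_i}$ be the number of roots of $F^*$ lying in $T_{v_i}$. Decompose $F^*=F_i\sqcup F'$, where $F_i=F^*\cap(T_{v_i}\cup\{uv_i\})$ and $F'=F^*\cap T_u^{(i-1)}$. When $q_i=1$, the edge $uv_i$ lies in $F^*$, so $v_i$ is a non-root member of $F_i$; hence $F_i-uv_i$ is feasible for the tuple $(v_i,1,0,k_{v_i})$, and $nw(F_i)\le f(v_i,1,0,k_{v_i})-w(uv_i)=\hat f(i,1,k_{v_i})$. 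When $q_i=0$, $F_i\subseteq T_{v_i}$ is feasible either for $(v_i,1,1,k_{v_i})$ or for $(v_i,0,0,k_{v_i})$, so $nw(F_i)\le\hat f(i,0,k_{v_i})$. The residual $F'$ has $K_i-k_{v_i}$ roots, exactly $s_i-q_i$ children of $u$ in $CH(u,i-1)$ linked to $u$, and for the value $q_{i-1}\in\{0,1\}$ recording whether $uv_{i-1}\in F'$ it is feasible for $(t_u,i-1,q_{i-1},K_i-k_{v_i},s_i-q_i)$, giving $nw(F')\le\widetilde f(t_u,i-1,q_{i-1},K_i-k_{v_i},s_i-q_i)$. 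Adding the two bounds yields the inequality.

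For the ``$\geq$'' direction, given any $k_{v_i}\in\{0,\ldots,n_i\}$ and $q_{i-1}\in\{0,1\}$, stitch together an optimal forest attaining $\widetilde f(t_u,i-1,q_{i-1},K_i-k_{v_i},s_i-q_i)$ with an optimal subforest attaining $\hat f(i,q_i,k_{v_i})$, adding back the edge $uv_i$ when $q_i=1$. The resulting subforest of $T_u^{(i)}$ contains $K_i$ roots, has $s_i$ children of $u$ linked to $u$, and uses $uv_i$ iff $q_i=1$, so it is feasible for $(t_u,i,q_i,K_i,s_i)$ and its net worth equals the summand in the recurrence. Taking the max over $(k_{v_i},q_{i-1})$ gives a lower bound on $\widetilde f(t_u,i,q_i,K_i,s_i)$, closing the equality. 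The initial conditions for $i=1$ are immediate because the residual piece is empty: the forest consists solely of the $v_1$-piece, which must hang off $u$ via $uv_1$ when $q_1=1$ (forcing $s_1=1$) and is otherwise independent of $u$ (forcing $s_1=0$).

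The step I expect to require the most care is the compatibility bookkeeping between the inner parameters $(p_v,t_v,k_v)$ at each child and the outer parameters $(q_i,s_i,K_i)$ at $u$: specifically, ensuring that $q_i=1$ forces the $v_i$-piece to be represented by $F(v_i,1,0,k_{v_i})$ (non-root child of $u$'s component), while $q_i=0$ allows either representation via $\max\{f(v_i,1,1,k_{v_i}),f(v_i,0,0,k_{v_i})\}$. A related subtlety is that the $+2\pi(u)$ contribution of the included vertex $u$ is deliberately \emph{not} carried inside $\widetilde f$ but is folded in once by the outer recurrence of Lemma~\ref{eq0406-1}; this is precisely why the budget $K_1-t_u$ (rather than $K_1$) appears in the base case. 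Once these conventions are fixed, the induction is routine, and the running time bound follows from a direct count of $(i,q_i,K_i,s_i)$-tuples and the constant work per transition.
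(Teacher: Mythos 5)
Your proposal is correct and takes essentially the same route as the paper: the paper states this transition formula without a separate proof, and your decomposition of an optimal witness into the last-child piece (confined to $T_{v_i}\cup\{uv_i\}$) plus a residual in $T_u^{(i-1)}$, with the two-directional bound-and-stitch induction, is exactly the standard DP-correctness argument the paper's construction presupposes, including the correct reading that the maximization runs over $(k_{v_i},q_{i-1})$ while $q_i$ is fixed on the left. One cosmetic quibble: the budget $K_1-t_u$ in the base case arises because $u$ itself supplies $t_u$ of the $K_1$ roots (so the residual at $i=1$ is $\{u\}$, not empty), rather than from the convention of deferring $2\pi(u)$ to Lemma~\ref{eq0406-1}; this mis-attribution is harmless since your induction uses the correct root count $K_i-k_{v_i}$ throughout.
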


\begin{coro}\label{coro0612-4}
	{\rm The value $f(u,p_u,t_u,k_u)$ in Lemma \ref{eq0406-1} for $p_u=1$ can be computed using Lemma \ref{lem0412-1} in time ¡¢ $O((d_u+1)^2(k_u+1)^2)$.}
\end{coro}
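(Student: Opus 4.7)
The plan is to imitate the structure of the proof of Corollary~\ref{coro0612-3}, with one extra layer of bookkeeping corresponding to the variables $q_i$ and $s_i$ introduced in Lemma~\ref{lem0412-1}. First I would observe that when $p_u = 1$, every feasible forest for the tuple $(u, 1, t_u, k_u)$ arises as some $\widetilde F(t_u, d_u, q_{d_u}, k_u, s_{d_u})$, where $q_{d_u} \in \{0,1\}$ records whether the edge $u v_{d_u}$ is used and $s_{d_u} \in \{0,1,\ldots,d_u\}$ records how many children of $u$ are linked to $u$ in the forest. Conversely every such $\widetilde F$ is feasible for $(u,1,t_u,k_u)$. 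This gives the closed-form reduction
$$f(u, 1, t_u, k_u) \;=\; \max_{q_{d_u} \in \{0,1\},\; s_{d_u} \in \{0,\ldots,d_u\}} \widetilde f(t_u, d_u, q_{d_u}, k_u, s_{d_u}),$$
so it suffices to populate the entire $\widetilde f$-table for the fixed $(t_u,k_u)$ and then take a single maximum over $O(d_u)$ entries at the end.

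Next I would count states. With $t_u$ fixed, an entry $\widetilde f(t_u, i, q_i, K_i, s_i)$ has parameters $i \in \{1,\ldots,d_u\}$, $q_i \in \{0,1\}$, $K_i \in \{t_u, t_u+1, \ldots, k_u\}$, and $s_i \in \{0,1,\ldots,i\}$. Multiplying these ranges gives $O((d_u+1)^2(k_u+1))$ table entries. By the transition formula of Lemma~\ref{lem0412-1}, each entry is computed by maximizing over $k_{v_i} \in \{0,\ldots,n_i\}$ and $q_i \in \{0,1\}$, which is $O(k_u+1)$ work per entry once the child values $f(v_i, \cdot, \cdot, \cdot)$ have been tabulated; the auxiliary quantity $\hat f(i, q_i, k_{v_i})$ is a constant-time lookup from those values. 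Multiplying states by per-transition cost therefore yields the claimed bound of $O((d_u+1)^2(k_u+1)^2)$; the final $O(d_u)$ cost of taking the maximum over $(q_{d_u}, s_{d_u})$ in the reduction above is absorbed into this bound.

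This corollary is essentially a bookkeeping statement, so I do not expect a genuine obstacle. The only delicate point to verify carefully is that the parameter ranges quoted in the counting step really do cover all feasible configurations --- in particular that the restriction $s_i \le i$ is tight, that allowing $K_i$ to vary in $\{t_u,\ldots,k_u\}$ is enough (since the number of roots contributed by $v_i$'s subtree is at most $|V(T_{v_i})|$ via the $n_i$-cap), and that compatibility of $(p_{v_i}, t_{v_i})$ is enforced by $\hat f$ precisely when $q_i = 0$. Once these consistency checks are in place, the calculation is routine and delivers exactly the stated time complexity.
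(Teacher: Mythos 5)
Your proposal is correct and follows essentially the same route as the paper: reduce $f(u,1,t_u,k_u)$ to a maximum of $\widetilde f(t_u,d_u,q_{d_u},\cdot,s_{d_u})$ over the terminal parameters, count the $O((d_u+1)^2(k_u+1))$ table entries, and multiply by the $O(k_u+1)$ work per transition from Lemma~\ref{lem0412-1}. The only (cosmetic) difference is the fourth argument in the reduction --- you use $k_u$, consistent with the definition of $\widetilde F$ counting all $K_i$ roots including $u$'s, while the paper writes $k_u-t_u$ --- an indexing convention that does not affect the counting or the stated bound.
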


\begin{proof}
	Note that for $p_u=1$, by the definitions of $f$ and $\widetilde f$,
	$$
	f(u,1,t_u,k_u)=\max_{q_{d_u}\in\{0,1\}\atop{ s_{d_u}\in\{0,1\,\ldots,d_u\}}}\{\widetilde f( t_u,d_u,q_{d_u},k_u-t_u, s_{d_u})\}.
	$$
	Denote by $q_u^*,s_u^*$ the parameter $q_{d_u},s_{d_u}$ that achieve the above maximum.
	So, to compute the $f$-value for $p_u=1$, we may use Lemma \ref{lem0215-1} to compute $\{\widetilde f( t_u,i,q_i,K_i,s_i)\colon$ $i=1,\ldots,d_u;\ q_i=0,1;\ K_i=t_u,t_u+1,\ldots,k_u;\ s_i=0,1,\ldots,i\}$ and output the last value $\widetilde f(t_u,d_u,q_u^*,k_u-t_u,s_u^*)$. The time complexity follows from the observations that there are $d_u(d_u+1)(k_u-t_u+1)$ values of the form $\widetilde f(t_u,i,q_i,K_i,s_i)$ to be computed and computing each $\widetilde f(t_u,i,q_i,K_i,s_i)$ takes times $2(n_i+1)\leq 2(k_u+1)$.
\end{proof}

\vskip 0.2cm
{\bf Combine the above together.} The pseudo code for the double dynamic programming method is presented in Algorithm \ref{algo4}

\begin{algorithm}[H]
	\caption {Rootless-Prune($T,k$)}
	\hspace*{0.02in}\raggedright{\bf Input:} A tree $T_r$ with an arbitrary root $r\in V(T)$ and an integer $0\leq k\leq |V(T)|$,
	
	\hspace*{0.02in}\raggedright{\bf Output:} A forest $F_{T,k}\subseteq T$ with $k$ connected components.
	
	\begin{algorithmic}[1]
		\STATE $F_{T,k}\leftarrow$ use Lemma~\ref{eq0406-1} embedded with Lemma \ref{lem0215-1} to compute $\max\limits_{p_r,t_r\in\{0,1\}}f(r,p_r,t_r,k)$;
		\RETURN $F_{T,k}$
	\end{algorithmic}\label{algo4}
\end{algorithm}

\subsubsection{Computation of forest $\mathcal{F}_{UR}$ satisfying property \eqref{eq0104-1}}\label{sec0107-2}
Construct an auxiliary rooted tree $T_{r'}^{aux}$ as follows. Let $\mathcal{C}$ be the forest computed by rootless-growth; create a dummy vertex $r'$ with an arbitrary penalty $\pi(r')$; for each tree $T\in\mathcal{C}$, choose an arbitrary vertex $r_T$ of $T$ as its root, link $r_T$ and $r'$ by an edge $\{r_{T}r'\}$ with an arbitrary weight $w(r_{T}r')$, resulting in a tree $T_{r'}^{aux}$ rooted at $r'$. Run the algorithm Rootless-Prune on input ($T_{r'}^{aux},K$). Note that $f(r',0,0,K)$ has the maximum net-worth value among all sub-forests of $\bigcup_{T\in\mathcal C}T$ containing exactly $K$ roots (and thus exactly $K$ components). So, $F(r',0,0,K)$ is the desired forest $\mathcal{F}_{UR}$ satisfying property \eqref{eq0104-1}.

As a by-product, for each $T\in\mathcal C$, let $k_T$ equal the number of roots of $F(r',0,0,K)$ contained in $T$, then $\{k_T:T\in\mathcal{C}\}$ is a best-configuration-scheme defined in \eqref{eq0613-1}.
The complete algorithm for URPCF$_K$ is presented in Algorithm~\ref{algo5}.

\begin{algorithm}
	\caption {URPCF$_K$}
	\hspace*{0.02in}\raggedright{\bf Input:} A graph $G=(V,E)$ with edge cost function $w$ and vertex penalty function $\pi$, and a positive integer $K$.
	
	\hspace*{0.02in}\raggedright{\bf Output:} A forest $\mathcal{F}_{UR}$ with $K$ connected components.
	
	\begin{algorithmic}[1]
		\STATE $\mathcal{C}\leftarrow$ Rootless-Growth($G$);
		\STATE Construct an auxiliary tree $T_{r'}^{aux}$ from $\mathcal{C}$;
		\STATE $\mathcal{F}_{UR}\leftarrow f(r',0,0,K)$ computed by Rootless-Prune($T_{r'}^{aux},K$);
		\RETURN $\mathcal{F}_{UR}$.
	\end{algorithmic}\label{algo5}
\end{algorithm}

\begin{theorem}\label{theo4}
	{\rm	Algorithm~\ref{algo5} is a 2-LMP for URPCF$_K$ with time complexity $O(n^2K^2+nm)$.}
\end{theorem}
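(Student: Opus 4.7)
The plan is to prove the approximation ratio through the chain sketched at the start of Section~\ref{sec002}, namely $pc(\mathcal A_1)\le pc(\mathcal A_2)\le 2\,opt_{I_R}=2\,opt_I$. The second inequality is exactly the statement of Theorem~\ref{theo1} applied to the rooted instance $I_R=(G,R)$ obtained by picking one vertex from each component of an optimal URPCF$_K$ solution, and the final equality is~\eqref{eq1207-1}. Note that since $\mathcal A_1$ never consults $R$, this analytic device does not enter the algorithm or affect its running time.

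The heart of the argument, and the step I expect to require the most care, is establishing $pc(\mathcal A_1)\le pc(\mathcal A_2)$, equivalently $nw_{\mathcal C}(\mathcal F_{UR})\ge nw_{\mathcal C_R}(\mathcal F_R)$ of~\eqref{eq0104-1}. The plan is to exploit the fact that Rootless-Growth is oblivious to roots, so as forests $\mathcal C=\mathcal C_R$. Consequently $\mathcal F_R$, the $K$-component output of $\mathcal A_2$ on $I_R$, is also a $K$-component subforest of $\mathcal C$. The claim is that $\mathcal F_R$ is then a feasible candidate in the dynamic program of Lemmas~\ref{eq0406-1}, \ref{lem0215-1} and \ref{lem0412-1} executed on the auxiliary tree $T_{r'}^{aux}$: fixing $p_{r'}=t_{r'}=0$ removes the dummy vertex $r'$ and the dummy edges $\{r_Tr'\}$ from consideration, so the table entry $f(r',0,0,K)$ is the maximum net-worth value over precisely those subforests of $\mathcal C$ that contain exactly $K$ components, with one designated root per component. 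Since $\mathcal F_{UR}$ attains that maximum by construction of Algorithm~\ref{algo5}, and $\mathcal F_R$ is one of the competitors, one gets $nw_{\mathcal C}(\mathcal F_{UR})\ge nw_{\mathcal C}(\mathcal F_R)=nw_{\mathcal C_R}(\mathcal F_R)$. The identity $pc_\mathcal T(\mathcal T')+nw_\mathcal T(\mathcal T')=2\pi(V(\mathcal T))$ then converts this into~\eqref{eq0703-1}, completing the chain.

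For the running time I plan to sum the contributions of the two phases. Rootless-Growth runs in $O(nm)$ by the argument already given in Theorem~\ref{theo1}; constructing $T_{r'}^{aux}$ from $\mathcal C$ is linear. In Rootless-Prune, the full table $\{f(u,p_u,t_u,k_u):k_u=0,\dots,K\}$ at a vertex $u$ is obtained by a single pass of the embedded DP, which by Corollaries~\ref{coro0612-3} and~\ref{coro0612-4} takes time $O((d_u+1)^2(K+1)^2)$. Because $\sum_u d_u=O(n)$ in $T_{r'}^{aux}$, the worst-case bound $\sum_u(d_u+1)^2=O(n^2)$ yields $O(n^2K^2)$ for the pruning stage, so the total running time is $O(n^2K^2+nm)$ as claimed. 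The only minor subtlety here is that we do not need to re-run the embedded DP for each value of $k_u$ separately; the inner tables $\widetilde f$ are populated for all indices up to $K$ in one sweep, so the corollaries' bounds really do amortize across all $k_u$.
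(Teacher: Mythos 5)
Your proposal is correct and follows essentially the same route as the paper: the approximation ratio is obtained from the chain $pc(\mathcal A_1)\leq pc(\mathcal A_2)\leq 2\,opt_{I_R}=opt_I\cdot 2$ via the rooted analog $I_R$, with \eqref{eq0104-1} justified exactly as in Section~\ref{sec0107-2} by observing that $\mathcal C=\mathcal C_R$ and that $\mathcal F_R$ is one of the $K$-component subforests over which $f(r',0,0,K)$ maximizes net worth. Your running-time accounting (including the amortization of the embedded DP tables over all $k_u$) matches the paper's use of Corollaries~\ref{coro0612-3} and~\ref{coro0612-4} together with $\sum_u d_u=n-1$, yielding the same bound $O(n^2K^2+nm)$.
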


\begin{proof}
	The performance ratio is guaranteed by the correctness of \eqref{eq0104-1}. For time complexity, the procedure Rootless-Growth takes time $O(nm)$. The most time-consuming part in Rootless-Prune is the double dynamic programming. By Corollary~\ref{coro0612-3} and \ref{coro0612-4}, computing one $f(u,p_u,t_u,k_u)$ takes time $O(d_u^2k_u^2)=O(d_u^2K^2)$, where $d_u$ is number of children of $u$ in tree $T_{r'}$. Note that $\sum_{u\in V(T_{r'})}d_u=n-1$. So the time complexity of Rootless-Prune is
	$$
	O\left(K^2\sum_{u\in V(T_{r'})}d(u)^2\right)=O\left(K^2\left(\sum_{u\in V(T_{r'})}d(u)\right)^2\right)=O(n^2K^2).
	$$
	Hence, the total running time of Algorithm~\ref{algo5} is $O(n^2K^2+nm)$.
\end{proof}

\section{Application: A 5-LMP for PCMinSSC}\label{sec003}
In this section, we present a 5-LMP algorithm for PCMinSSC. The algorithm consists of two steps. The first step is {\em grouping}: for each guessed $K$ $(K=1,2,\ldots,n)$, compute a forest $F_K$ with $K$ components for the URPCF$_K$ problem by calling Algorithm~\ref{algo5}. The vertices within the same component $T$ are grouped together, and will be sweep-covered by a shared set of mobile sensors cooperatively. The second step is {\em allocation}, for each group, if it contains exactly one vertex $v$, then one mobile sensor is stationed at $v$. Otherwise construct a cycle $C$ by doubling the edges of $T$ to form an Euler tour and then short-cutting this Euler tour. Because the graph is metric, by triangle inequality, we have
\begin{align}\label{eq1017-1}
	w(C)\leq 2w(T)
\end{align}
(see Chapter 3 of \cite{Vazirani} for the short cutting method). Then $\lceil\frac{w(L)}{at}\rceil$ mobile sensors are uniformly deployed along this cycle and move in the same direction, forming a sweep-cover for those vertices in this group. The final solution is the best one among all guesses of $K$.

\begin{algorithm}[h]
	\caption{$5$-LMP for PCMinSSC}
	\hspace*{0.02in}\raggedright{\bf Input:} An edge weighted graph $G=(V,E,w)$.
	
	\hspace*{0.02in}\raggedright{\bf Output:} Allocation of a set of mobile sensors.
	
	\begin{algorithmic}[1]
		\FOR {$K= 1,\ldots,n$}
		\STATE $F_K\leftarrow$ $K$ vertex-disjoint trees $T_1^{(K)},\ldots,T_K^{(K)}$ computed by Algorithm~\ref{algo5}.
		\STATE $\mathcal C_K\leftarrow$ $\{C_1^{(K)},\ldots,C_k^{(K)}\}$, where $C_i^{(K)}$ is the cycle obtained from $T_i^{(K)}$ by the short-cutting method.
		\STATE $n_K=\sum_{i:|V(T_i^{(K)})|\geq 2}\left\lceil w(C_i^{(K)})/at \right\rceil+|\{i\colon |V(T_i^{(K)})|=1\}|$;
		\ENDFOR
		\STATE $\hat K=\arg\min_{K\in\{1,\ldots,n\}}\{c\cdot n_K+5\sum_{v \in V\setminus V(\mathcal P_K)}\pi(v)\}$;
		\FOR{$i=1,\ldots,\hat K$}
		\STATE If $|V(C_i^{(\hat K)})|\geq 2$, then evenly allocate $\left\lceil w(C_i^{(\hat K)})/at\right\rceil$ mobile sensors on cycle $C_i^{(\hat K)}$, otherwise, allocate one mobile sensor on $C_i^{(\hat K)}$.
		\ENDFOR
		\RETURN  the above deployment.
	\end{algorithmic}\label{algo6}
\end{algorithm}

To analyze the performance of Algorithm \ref{algo6}, a crucial insight is that the trajectories of the mobile sensors within an optimal solution over the time span [0, t] can be trimmed into a collection of vertex-disjoint trees. The proof method aligns with that presented in \cite{Liang-PCMSSC-TCS}, with the distinction that \cite{Liang-PCMSSC-TCS} relies heavily on predetermined roots, whereas this paper does not. In particular, a {\em scaling technique} used in \cite{Liang-PCMSSC-TCS} also plays a crucial role here: it is assumed
\begin{equation}\label{eq1017-2}
	\frac{at}{c}=\frac45.
\end{equation}
We can make this assumption without loss of generality, as we have the flexibility to scale both parameter $\pi$ and coefficient $c$ simultaneously. This scaling operation does not alter the solution and maintains the same approximation ratio. For the sake of completeness in this paper, we provide a detailed proof in the following.

\begin{theorem}
	{\rm	Algorithm~\ref{algo6} is a 5-LMP for PCMinSSC, which can be executed in time $O(n^5)$, where $n$ is the number of vertices.}
\end{theorem}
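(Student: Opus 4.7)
The plan is to execute a Lagrangian-relaxation-style analysis, using Theorem~\ref{theo4} as the core ingredient together with the trimming argument from \cite{Liang-PCMSSC-TCS}. Let $\mathcal{S}^*$ be an optimal PCMinSSC solution with value $opt = c|\mathcal{S}^*| + \pi(V \setminus \mathcal{C}(\mathcal{S}^*))$. By the trimming argument, we can partition $\mathcal{S}^*$ into $K^*$ groups $\mathcal{S}_1^*, \ldots, \mathcal{S}_{K^*}^*$, each sweep-covering a tree $T_i^*$, so that the trees $T_i^*$ are pairwise vertex-disjoint and together span exactly $\mathcal{C}(\mathcal{S}^*)$. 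Since group $i$ must traverse a closed walk through every vertex of $T_i^*$ within time $t$, we have $2w(T_i^*) \leq at \cdot |\mathcal{S}_i^*|$. Summing over $i$ and invoking \eqref{eq1017-2} gives
\begin{equation*}
w(F^*) \leq \tfrac{at}{2}\,|\mathcal{S}^*| = \tfrac{2c}{5}\,|\mathcal{S}^*|,
\end{equation*}
where $F^* = \bigcup_i T_i^*$. Because $F^*$ has $K^*$ components, it is feasible for URPCF$_{K^*}$, and hence
\begin{equation*}
opt_{URPCF_{K^*}} \leq w(F^*) + \pi(V \setminus V(F^*)) \leq \tfrac{2}{5}c|\mathcal{S}^*| + \bigl(opt - c|\mathcal{S}^*|\bigr).
\end{equation*}

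Next, I apply Theorem~\ref{theo4} at the guess $K = K^*$: Algorithm~\ref{algo5} returns a forest $F_{K^*}$ with $K^*$ components satisfying $w(F_{K^*}) + 2\pi(V \setminus V(F_{K^*})) \leq 2\cdot opt_{URPCF_{K^*}}$. For each non-singleton tree $T_i^{(K^*)}$ of $F_{K^*}$, doubling and short-cutting yields a cycle $C_i^{(K^*)}$ with $w(C_i^{(K^*)}) \leq 2w(T_i^{(K^*)})$ by \eqref{eq1017-1}, and the algorithm deploys $\lceil w(C_i^{(K^*)})/(at)\rceil \leq 2w(T_i^{(K^*)})/(at) + 1$ sensors on it; singleton trees use exactly one sensor. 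Aggregating and using \eqref{eq1017-2},
\begin{equation*}
c\cdot n_{K^*} \leq \tfrac{2c}{at}\,w(F_{K^*}) + cK^* = \tfrac{5}{2}\,w(F_{K^*}) + cK^*.
\end{equation*}

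Combining these three inequalities yields
\begin{align*}
c\cdot n_{K^*} + 5\pi(V \setminus V(F_{K^*})) &\leq \tfrac{5}{2}\bigl[w(F_{K^*}) + 2\pi(V \setminus V(F_{K^*}))\bigr] + cK^* \\
&\leq 5\cdot opt_{URPCF_{K^*}} + cK^* \\
&\leq 2c|\mathcal{S}^*| + 5\,opt - 5c|\mathcal{S}^*| + cK^* \\
&= 5\cdot opt - 3c|\mathcal{S}^*| + cK^*.
\end{align*}
The trivial bound $K^* \leq |\mathcal{S}^*|$ (each group contains at least one sensor) forces the last line to be at most $5\cdot opt$. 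Since $\hat{K}$ in Algorithm~\ref{algo6} is chosen to minimize $c\cdot n_K + 5\sum_{v}\pi(v)$, its output value is no larger, establishing the 5-LMP guarantee.

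For the running time, by Theorem~\ref{theo4} each call to Algorithm~\ref{algo5} costs $O(n^2K^2 + nm)$; the cycle construction and counting step contribute only $O(n)$. Summing over $K = 1, \ldots, n$ and using $\sum_{K=1}^{n}K^2 = O(n^3)$ together with $m = O(n^2)$ gives a total of $O(n^5 + n^2m) = O(n^5)$. The main obstacle I expect is justifying the rootless trimming step cleanly: unlike \cite{Liang-PCMSSC-TCS}, where base stations furnish the roots, here one must argue that the trimmed trees $T_i^*$ can be treated as an unrooted forest feasible for URPCF$_{K^*}$, which is precisely why the unrooted 2-LMP of Theorem~\ref{theo4} (rather than its rooted predecessor) is indispensable for improving the ratio from $8$ to $5$.
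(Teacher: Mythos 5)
Your overall architecture is the same as the paper's: trim the optimal routes over $[0,t]$ into a vertex-disjoint forest that is feasible for URPCF at some guess $K^*$, apply the 2-LMP of Theorem~\ref{theo4} at that guess, convert each tree into a cycle by doubling and short-cutting, and close the accounting via the scaling $at/c=4/5$. The one step that fails is your strengthened trimming bound $2w(T_i^*)\le at\,|\mathcal{S}_i^*|$, justified by the assertion that group $i$ ``must traverse a closed walk through every vertex of $T_i^*$ within time $t$.'' Neither the premise nor the conclusion holds: a sensor's trajectory during $[0,t]$ need not be closed, and even when it is, the factor $2$ is wrong. Concretely, let a single sensor circle a cycle of total length exactly $at$ through $n$ roughly equally spaced vertices; every vertex is visited once per period $t$, so one sensor sweep-covers them all, yet the trimmed tree (the cycle minus one edge) has weight $at(1-1/n)$, so $2w(T_i^*)$ tends to $2at> at\cdot|\mathcal{S}_i^*|$. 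In effect you are asserting the lower bound $|\mathcal{S}_i^*|\ge 2w(T_i^*)/(at)$, i.e.\ the converse of the deployment direction, which is false; what is true is only $|\mathcal{S}_i^*|\ge w(T_i^*)/(at)$, because $T_i^*$ lies inside the union of the group's trajectories, whose total length is at most $at\,|\mathcal{S}_i^*|$.

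Fortunately your own accounting has enough slack that this correct factor-1 bound already suffices, and with it your argument collapses onto the paper's. From $w(F^*)\le at\,|\mathcal{S}^*|=\tfrac{4}{5}c|\mathcal{S}^*|$ you get $opt_{URPCF_{K^*}}\le \tfrac{4}{5}c|\mathcal{S}^*|+\bigl(opt-c|\mathcal{S}^*|\bigr)$, hence $c\cdot n_{K^*}+5\pi(V\setminus V(F_{K^*}))\le 5\,opt_{URPCF_{K^*}}+cK^*\le 5\,opt-c|\mathcal{S}^*|+cK^*\le 5\,opt$, using $K^*\le|\mathcal{S}^*|$; the paper simply takes one sensor per component, i.e.\ $K^*=|\mathcal{S}^*|$ and $w(F^*_{K^*})\le K^*at$, which avoids the issue entirely. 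Your sensor-counting step $\lceil w(C_i)/(at)\rceil\le 2w(T_i)/(at)+1$, the transfer of the guarantee from the guess $K^*$ to the minimizing $\hat K$, and the $O(n^5)$ running-time estimate are all correct and match the paper.
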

\begin{proof}
	Suppose that an optimal solution to a PCMinSSC instance uses $K^*$ mobile sensors and denote by $Z^*$ the set of vertices that are not sweep-covered.
	
	{\em Claim.} There exists $K^*$ vertex-disjoint trees $F^*_{K^*}$ spanning $V\setminus Z^*$ such that $$w(F^*_{K^*})\leq k^*at.$$
	
	Note that for each of the $K^*$ mobile sensors, its route in time interval $[0,t]$ is a walk, and the union of these walks, denoted by $W$, spans $V\setminus Z^*$. Deleting some edges if necessary, $W$ can be modified into a forest $F^*_{K^*}$ containing $K^*$ connected components, still spanning $V\setminus Z^*$. Note that $w(W)\leq K^*at$. Therefore,
	\begin{align}\label{eq1017-3}
		w(F^*_{K^*})\leq w(W)\leq K^*at.
	\end{align}
	The claim is proved.
	
	When $K^*$ is guessed in line 1 of Algorithm \ref{algo6}, a $K^*$-forest $F_{K^*}$ is obtained by the 2-LMP algorithm Algorithm~\ref{algo5}. So
	\begin{align}\label{equa0712-1}
		w(F_{K^*})+ 2\pi(V\setminus V(F_{K^*}))\leq 2\left(w(F^*_{K^*})+\pi(V\setminus V(F^*_{K^*}))\right)\leq 2\left(K^*at+\pi(Z^*)\right).
	\end{align}
	
	Denote by $\mathcal S_{\mathcal A}$ the set of mobile sensors computed by Algorithm~\ref{algo6}. The number of mobile sensors is upper bounded by
	$$
	|\mathcal S_{\mathcal A}|\leq \sum_{i=1}^{K^*} \left(\frac{w(C^{(K^*)}_i)}{at}+1\right)
	\leq \sum_{i=1}^{K^*} \left(\frac{2w(T^{(K^*)}_i)}{at}+1\right)=\left(\frac{2w(F_{K^*})}{at}+K^* \right),
	$$
	where the first inequality holds because the number of sensors to sweep-cover those target points grouped by $C^{(K^*)}_i$ is $\max\left\{\left\lceil w(C_i^{(\hat K)})/at\right\rceil,1\right\}\leq w(C^{(K^*)}_i)/at+1$, and the second inequality holds because of \eqref{eq1017-1} and line 3 of Algorithm~\ref{algo6}. Then making use of \eqref{eq1017-2} and \eqref{equa0712-1},
	the sensor-plus-penalty value is upper bounded by
	$$
	c\cdot|\mathcal S_{\mathcal A}|+5\pi(V\setminus V(F_{K^*}))\leq c\cdot\left(\frac{2w(F_{K^*})}{at}+K^* \right)+5\pi(V\setminus V(F_{K^*}))
	\leq 5(c\cdot K^*+\pi(Z^*)).
	$$
	
	As for the running time, grouping the vertices using Algorithm~\ref{algo5} is the most time-consuming part, which takes time $O(n^2K^2+mn)$ by Theorem~\ref{theo4}. Considering the number of guesses, the total time is $\sum_{K=1}^n O(n^2K^2+mn)=O(n^5)$.
\end{proof}

\section{Conclusion and Future Work}\label{sec004}
In this paper, we designed a 2-LMP algorithm for URPCF$_K$ with time complexity $O(n^2K^2+nm)$, based on which we obtain a 5-LMP for PCMinSSC. The algorithm for URPCF$_K$ consists of a rootless-growth step and a rootless-prune step, the former step is realized by treating each vertex equally during the growth and the latter step makes use of a dual problem of finding a forest with the maximum net worth value, which is realized by a double dynamic programming, where two (sub-)dynamic programming algorithms are embedded in a (main-)dynamic programming algorithm.

The performance analysis uses a rooted analog as a bridge: for an instance $I$ of URPCF$_K$, there is a rooted analog $I_R$ achieving the same optimal value as $I$; for $I_R$, we design a 2-LMP algorithm $\mathcal{A}_2$; for $I$, we design an algorithm $\mathcal{A}_1$; by proving that the objective value of $\mathcal{A}_1$ is no worse than that of $\mathcal{A}_2$, we get $\mathcal{A}_1$ is 2-LMP for URPCF$_K$. Such a design and analysis method establishes a relation between rooted and rootless versions without guessing exponential number of roots. We believe that such a method can be applied to some other similar network design problems.

In fact, we have tried to extend the method to deal with the {\em prize-collecting traveling multi-salesmen problem} ($m$-PCTSP). Given a metric graph $G=(V,E,\pi,w)$ as in PCMinSSC, the goal of $m$-PCTSP is to find $m$ cycles to minimize the cost on cycles plus the penalty of the vertices not in cycles. Combining our method with the argument in \cite{Goeman1995} (which extends the 2-approximation algorithm of PCST to a 2-approximation algorithm of PCTSP), we can obtain the following theorem.
\begin{theorem}
	{\rm There exists an $2$-LMP algorithm for $m$-PCTSP.}
\end{theorem}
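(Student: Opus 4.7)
The plan is to combine the rootless-growth plus rootless-prune framework of Algorithm~\ref{algo5} with the LP-strengthening trick of Goemans and Williamson \cite{Goeman1995} that lifts a PCST algorithm to one for PCTSP. The key observation is that the incidence vector of any collection of cycles crosses every cut an even number of times, so for every vertex set $S$ met by at least one cycle we have $x(\delta(S))\ge 2$. Consequently, the natural LP relaxation of $m$-PCTSP is the strengthening
\[
x(\delta(S)) + 2\!\!\sum_{T\subseteq V:\, S\subseteq T}\!\! z(T)\ \ge\ 2,
\qquad \forall\, S\subseteq V,
\]
of the URPCF$_m$ LP~\eqref{eq1208-1}, with the same objective $\sum_e w(e)x(e)+\sum_T \pi(T)z(T)$. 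The substitution $x'=x/2$ turns this into the URPCF$_m$ LP on the same graph with every edge cost doubled; equivalently, the dual feasibility constraints of the $m$-PCTSP LP coincide with those of the relaxed URPCF$_m$ dual~\eqref{eq1221-1} on the input $(G, 2w, \pi)$.

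First I would run Algorithm~\ref{algo5} on the input $(G, 2w, \pi, m)$, i.e.\ with every edge cost doubled. Since Algorithm~\ref{algo5} and its analysis are oblivious to the absolute scale of $w$, Theorem~\ref{theo4} applies verbatim and produces a forest $F$ with exactly $m$ connected components together with dual variables $\{y_S\}$ certifying
\[
2w(F)+2\pi(V\setminus V(F))\ \le\ 2\sum_{S} y_S.
\]
By the LP equivalence above, the same $\{y_S\}$ are feasible for the dual of the $m$-PCTSP LP, so $\sum_S y_S$ is a lower bound on the $m$-PCTSP LP optimum, and hence on $opt_{m\text{-}PCTSP}$. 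Combining these inequalities yields
\[
2w(F)+2\pi(V\setminus V(F))\ \le\ 2\,opt_{m\text{-}PCTSP}.
\]

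Next I would convert $F$ into $m$ cycles by the standard double-and-short-cut procedure: for each non-trivial component $T_i$ of $F$, duplicate the edges to obtain an Eulerian multigraph, traverse an Euler tour, and short-cut repeated vertices using the triangle inequality (valid because $w$ is metric) to obtain a cycle $C_i$ with $V(C_i)=V(T_i)$ and $w(C_i)\le 2w(T_i)$; every singleton component is retained as a degenerate cycle of cost zero. Writing $\mathcal{C}=\{C_1,\ldots,C_m\}$, we have $V(\mathcal{C})=V(F)$ and $w(\mathcal{C})\le 2w(F)$, whence
\[
w(\mathcal{C})+2\pi(V\setminus V(\mathcal{C}))\ \le\ 2w(F)+2\pi(V\setminus V(F))\ \le\ 2\,opt_{m\text{-}PCTSP},
\]
which is exactly the $2$-LMP guarantee for $m$-PCTSP.

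The main obstacle, inherited from \cite{Goeman1995}, is justifying the LP-strengthening step and checking that the primal-dual analysis of Algorithm~\ref{algo5} (the rootless-growth phase, the $K$-forest bookkeeping used only in the analysis, and the rootless-prune phase) transfers without modification under the strengthened $m$-PCTSP LP. The heart of the matter is that the doubling/short-cutting step loses precisely the factor of $2$ on edge cost that is ``paid for'' by strengthening the cut inequality from $\ge 1$ to $\ge 2$; everything else in Theorem~\ref{theo4} passes through unchanged, and the running time is dominated by a single invocation of Algorithm~\ref{algo5} with $K=m$.
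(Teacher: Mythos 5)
Your algorithm is the one the paper has in mind (run Algorithm~\ref{algo5} on $(G,2w,\pi)$ --- equivalently, halve the penalties --- then double and short-cut each of the $m$ components), and you correctly recognized that the black-box statement of Theorem~\ref{theo4} is not enough: applied to the doubled-cost instance it only compares against the forest optimum and would lose an extra factor of $2$, so the argument must go through the dual certificate. The gap is in how you justify that $\sum_S y_S$ lower-bounds $opt_{m\text{-}PCTSP}$. The LP you write, with the cut constraint $x(\delta(S))+2\sum_{T\supseteq S}z(T)\geq 2$ imposed for \emph{all} $S\subseteq V$, is not a valid relaxation of the unrooted problem: in an optimal collection of cycles, a cycle may lie entirely inside $S$, in which case $x(\delta(S))=0$ while $S$ contains covered vertices and hence is contained in no penalized set $T$ with $z(T)=1$, so the constraint is violated by a feasible integral solution. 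For the same reason the sum of the grown duals over \emph{all} $S\subseteq V$ need not be at most the optimum (already for URPCF$_K$: on a single vertex $v$ with large $\pi(v)$ and $K=1$, rootless growth produces $y_{\{v\}}=\pi(v)$ while $opt=0$); this is precisely why the paper's analysis only ever uses $\sum_{S\subseteq V-R}y_S$, cf.\ \eqref{eq1228-2} and \eqref{eq108-1}, with roots $R$ drawn from an optimal solution.

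The fix is to route the lower bound through the same rooted bridge as in Section~\ref{sec002}, not through an unrooted LP. In the analysis only, pick one root per cycle of an optimal $m$-PCTSP solution, obtaining $R^*$ with $|R^*|=m$; the rooted $m$-PCTSP LP, with the constraint $x(\delta(S))+2\sum_{T\supseteq S}z(T)\geq 2$ only for $S\subseteq V-R^*$, is a valid relaxation whose optimum is at most $opt_{m\text{-}PCTSP}$. The duals $y_S$ grown on $(G,2w,\pi)$ satisfy $\sum_{S:e\in\delta(S)}y_S\leq 2w(e)$ and $\sum_{S\subseteq T}y_S\leq\pi(T)$, so $y'_S=y_S/2$ (restricted to $S\subseteq V-R^*$) is feasible for the rooted dual, giving $\sum_{S\subseteq V-R^*}y_S=2\sum_S y'_S\leq opt_{m\text{-}PCTSP}$. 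Combining this with the Theorem~\ref{theo1} analysis on the doubled costs, which gives $2w(F_{R^*})+2\pi(X)\leq 2\sum_{S\subseteq V-R^*}y_S$, with the comparison \eqref{eq0104-1} between rootless and rooted pruning, and with your short-cutting bound $w(\mathcal C)\leq 2w(F)$, one obtains $w(\mathcal C)+2\pi(V\setminus V(\mathcal C))\leq 2\,opt_{m\text{-}PCTSP}$, i.e.\ the claimed $2$-LMP. So your construction and accounting are right; only the validity of the lower bound needs the rooted restriction, exactly as in the paper's own treatment of URPCF$_K$.
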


Note that both the {\em minimum spanning tree spanning $k$ vertices} ($k$-MST) problem (which asks for a minimum tree spanning at least $k$ vertices) and  the {\em budgeted prize-collecting tree} (BPCT) problem (which asks for a tree spanning as many vertices as possible within a budget) have 2-approximations. Do their unrooted forest versions also admit the same approximation ratio? These are interesting topics to be further explored.

\section*{Acknowledgment}
This research is supported by NSFC (U20A2068).

\bibliographystyle{plain}
\bibliography{PCSC}

\end{document}